\newtheorem{definition}{Definition}
\newtheorem{theorem}{Theorem}
\newtheorem{lemma}{Lemma}
\newtheorem{corollary}{Corollary}
\newtheorem{proposition}{Proposition}
\newtheorem{remark}{Remark}
\newtheorem{example}{Example}
\newenvironment{proof}{\noindent{\textbf{\emph{Proof.}}}}
\begin{document}
\title{Some results about double cyclic codes over $\mathbb{F}_{q}+v\mathbb{F}_{q}+v^2\mathbb{F}_{q}$}
\author[a]{Tenghui Deng \thanks{Corresponding author.\\{E-mail addresses: dth16@mails.tsinghua.edu.cn (Tenghui Deng), y-j@tsinghua.edu.cn (Jing Yang).}}}
\author[a]{Jing Yang }
\affil[a]{Department of Mathematics, Tsinghua University, Beijing 100084, PR China }
\renewcommand*{\Affilfont}{\small\it}

\date{March 10, 2021}

\maketitle

\begin{abstract}
Let $\mathbb{F}_{q}$ be the finite field with $q$ elements. This paper mainly researches the polynomial representation of double cyclic codes over $\mathbb{F}_{q}+v\mathbb{F}_{q}+v^2\mathbb{F}_{q}$ with $v^3=v$. Firstly, we give the generating polynomials of these double cyclic codes. Secondly, we show the generating matrices of them. Meanwhile, we get quantitative information related to them by the matrix forms. Finally, we investigate the relationship between the generators of double cyclic codes and their duals.

\end{abstract}

{Keywords: Double cyclic codes, Non-chain rings, Generator polynomials, Generating matrices}

{Mathematics Subject Classification (2010)} 94B05  94B15

\section{Introduction}
The theory of error-correcting code plays a crucial role in Internet data transmission, satellite positioning and communication. The significance has become increasingly prominent with the promotion and popularization of a series of high technologies such as artificial intelligence and 5G technology. For most coding researchers, Cyclic Code is the main research object because of its good structure and it is easy to discover, understand and decode in the process of application.

Based on the more flexible algebraic structure, the coding theory over finite rings is becoming more active in recent research. Recently, Borges et al. proposed a class of new codes called $\mathbb{Z}_2\mathbb{Z}_4$-additive codes in \cite{Borges1}. Since then, this family of codes were significant from theory to application, and some generalizations were also studied deeply. Abualrub in \cite{Abualrub1,Abualrub2}, Aydogdu in \cite{Aydogdu1,Aydogdu2}, T. Roy in \cite{Roy} and Borges in \cite{Borges2} were remarkable works. Around 2014, Borges et al. investigated double cyclic codes' algebraic structures $\mathbb{Z}_2$. The authors determined the generator polynomials of this family of codes and their duals. It was a vital work for the research to double cyclic codes. Since then, numerous articles about double cyclic codes have appeared. Such as: On double cyclic codes over $\mathbb{Z}_4$(\cite{Jian}), $\mathbb{Z}_4$-Double Cyclic Codes Are Asymptotically Good(\cite{Jian2}), Double $\lambda$-constacyclic codes over finite chain rings(\cite{Wang}) and Double cyclic codes over $\mathbb{F}_q+u\mathbb{F}_q+u^2\mathbb{F}_q$(\cite{Yao}). Some results on linear codes over $\mathbb{F}_p+u\mathbb{F}_p+u^2\mathbb{F}_p$(\cite{Jian3}) and Some results on $\mathbb {Z} _p\mathbb {Z} _p [v] $-additive cyclic codes(\cite{Lin2}) are also relevant works.

For $\mathbb{F}_{q}+v\mathbb{F}_{q}+v^2\mathbb{F}_{q}$ with $v^3=v$, there also were many results about this non-chain ring, A. Melakhessou et al.: On codes over $\mathbb{F}_{q}+v\mathbb{F}_{q}+v^2\mathbb{F}_{q}$ in \cite{Melakhessou}, Fanghui Ma et al.: Constacyclic codes over the ring and their applications of constructing new non-binary quantum codes in \cite{Fanghui}, Minjia Shi et al.: Skew cyclic codes over $\mathbb{F}_{q}+v\mathbb{F}_{q}+v^2\mathbb{F}_{q}$ in \cite{Shi}.

In this paper, applying the methods developed by Borges et al. and Gao et al., we get some algebraic structures of double cyclic codes over $\mathbb{F}_q+v\mathbb{F}_q+v^2\mathbb{F}_q$ and give some example for these double cyclic codes.

The rest of the article is organized as follows. In Section 2, we provide some necessary preliminaries about the polynomial theory over $\mathbb{F}_{q}+v\mathbb{F}_{q}+v^2\mathbb{F}_{q}$. In Section 3, we present the definition of double cyclic codes and set up some structural properties of double cyclic codes over $\mathbb{F}_{q}+v\mathbb{F}_{q}+v^2\mathbb{F}_{q}$. In Section 4, we give the generating matrix forms of these double cyclic codes. Simultaneously, utilizing these generating matrix forms, we get some quantitative information relating to double cyclic codes. In Section 5, we obtain generators' relationship between double cyclic codes and their duals.

\section{Preliminaries}
Throughout this paper, let $R$ denote $\mathbb{F}_{q}+v\mathbb{F}_{q}+v^2\mathbb{F}_{q}$, where $v^3=v$ and $\mathbb{F}_q$ be a finite field of a characteristic odd prime. From the knowledge of finite rings, we know that $R$ is equivalent to the quotient ring $\mathbb{F}_q[x]/\left\langle v^3-v\right\rangle$ and is also a finite commutative ring with identity. These indicated that $R$ is a principal ring that has only three non-trivial maximal ideals $\left\langle 1-v \right\rangle , \left\langle v \right\rangle , \left\langle 1+v \right\rangle $. Consequently, from the Chinese Remainder Theorem, we get $$R=R/\left\langle v\right\rangle  \oplus R/\left\langle 1-v\right\rangle  \oplus R/\left\langle 1+v\right\rangle .$$ Let $v_1=1-v^2, v_2=\frac{v+v^2}{2}, v_3=\frac{v^2-v}{2}$, we have $(v_1,v_2,v_3)=(1,v,v^2) \left(\begin{smallmatrix} 1&0&0 \\ 0&\frac{1}{2}&-\frac{1}{2} \\-1&\frac{1}{2}&\frac{1}{2} \end{smallmatrix}\right)$. Since $\det\left(  \left(\begin{smallmatrix} 1&0&0 \\ 0&\frac{1}{2}&-\frac{1}{2} \\-1&\frac{1}{2}&\frac{1}{2} \end{smallmatrix}\right) \right)  \ne 0$, this means that $\left\lbrace v_1,v_2,v_3 \right\rbrace $ is also a basis for $R$. Note that $v_1, v_2, v_3$ are all orthogonal idempotent elements in $R$. Then $R$ can be decomposed into $$R=Rv_1 \oplus Rv_2 \oplus Rv_3=\mathbb{F}_qv_1 \oplus \mathbb{F}_qv_2 \oplus \mathbb{F}_qv_3.$$ 

For $r \in R$, let $r=\sum_{i=1}^{3}r_{v_i}v_i$ with $r_{v_i} \in \mathbb{F}_q, i=1,2,3,$ define three projections as $P_{v_i}: r=\sum_{i=1}^{3}r_{v_i}v_i \mapsto r_{v_i}, i=1,2,3$. Then $P_{v_i}, i=1,2,3$ are $\mathbb{F}_q$-alegbra homomorphism. $\forall n \in \mathbb{N}$, we extend these maps from $R$ to $R^n$ naturally. Let $P_{v_i}: R^n \to \mathbb{F}^n_q, \text{ } (r_1,\dots r_n) \mapsto ((r_1)_{v_i},\dots (r_n)_{v_i})$, $i=1,2,3$. They still are $\mathbb{F}_q$-algebra homomorphism. In the same way, these maps can also expand to polynomial rings over $R$.

For each polynomial $r(x) \in R[x]$, considering the commutativity of $\left\lbrace  v_1,v_2,v_3 \right\rbrace $ and $x$, we can decompose the coefficients into standard bases and merge the homologous terms properly, then we can get the unique decomposition $r(x)=\sum_{i=1}^{3}r_{v_i}(x)v_i$ about these set of standard bases. Define three maps $P_{v_i}:R[x] \rightarrow \mathbb{F}_q[x] \text{ } r(x)=\sum_{i=1}^{3}r_{v_i}(x)v_i \mapsto r_{v_i}(x),i =1,2,3$. For $a(x),b(x) \in R[x]$, let $a(x)=\sum_{i=1}^{3}a_{v_i}(x)v_i, b(x)=\sum_{i=1}^{3}b_{v_i}(x)v_i$, we have that
$$\begin{cases}
a(x)+b(x) &=\sum_{i=1}^{3}(a_{v_i}(x)+b_{v_i}(x))v_i, \\
a(x)b(x)  &=\sum_{i=1}^{3}(a_{v_i}(x)b_{v_i}(x))v_i.  \\
\end{cases}$$ 
This mean that the projections $P_{v_i},i=1,2,3$ into $R[x]$ are also $\mathbb{F}_q[x]$-homomorphism. From the above station, we obtain that $R[x]=(\oplus_{i=1}^{3}\mathbb{F}_qv_i)[x]=\oplus_{i=1}^{3} \mathbb{F}_{q}[x]v_i$.

About the divisibility between any two elements in $R[x]$, we have 
\begin{lemma}
Let $a(x),b(x) \in R[x]$ with $a(x)=\sum_{i=1}^{3}a_{v_i}(x)v_i, b(x)=\sum_{i=1}^{3}b_{v_i}(x)v_i$. Then $a(x)|b(x)$ in $R[x]$ if and only if $a_{v_i}(x)|b_{v_i}(x)$, $i=1,2,3$ in $\mathbb{F}_{q}[x]$.
\end{lemma}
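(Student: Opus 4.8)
The plan is to prove both implications directly from the ring decomposition $R[x]=\oplus_{i=1}^{3}\mathbb{F}_q[x]v_i$ together with the componentwise multiplication rule $a(x)b(x)=\sum_{i=1}^{3}(a_{v_i}(x)b_{v_i}(x))v_i$ already established above. The essential observation is that, because the $v_i$ are orthogonal idempotents, multiplication in $R[x]$ splits into three independent copies of multiplication in $\mathbb{F}_q[x]$, so divisibility should be detected factor-by-factor; the projections $P_{v_i}$ being $\mathbb{F}_q[x]$-algebra homomorphisms is exactly what lets us pass between the two sides.

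For the necessity direction I would assume $a(x)\mid b(x)$ in $R[x]$, so that $b(x)=a(x)c(x)$ for some $c(x)=\sum_{i=1}^{3}c_{v_i}(x)v_i\in R[x]$. Applying the projection $P_{v_i}$, which respects both addition and multiplication, to this identity yields $b_{v_i}(x)=a_{v_i}(x)c_{v_i}(x)$ in $\mathbb{F}_q[x]$ for each $i=1,2,3$; hence $a_{v_i}(x)\mid b_{v_i}(x)$ for each $i$, as required.

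For the sufficiency direction I would assume $a_{v_i}(x)\mid b_{v_i}(x)$ in $\mathbb{F}_q[x]$ for every $i$, choose witnesses $c_{v_i}(x)\in\mathbb{F}_q[x]$ with $b_{v_i}(x)=a_{v_i}(x)c_{v_i}(x)$, and assemble $c(x)=\sum_{i=1}^{3}c_{v_i}(x)v_i\in R[x]$. The multiplication rule then gives $a(x)c(x)=\sum_{i=1}^{3}a_{v_i}(x)c_{v_i}(x)v_i=\sum_{i=1}^{3}b_{v_i}(x)v_i=b(x)$, so $a(x)\mid b(x)$ in $R[x]$.

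I do not anticipate any genuine obstacle: the whole argument rests on the orthogonal-idempotent decomposition, which reduces divisibility to three independent instances over $\mathbb{F}_q[x]$. The only point requiring a little care is to confirm that the $c(x)$ assembled in the sufficiency direction is precisely the unique standard-basis decomposition guaranteed earlier, so that the componentwise multiplication formula applies verbatim; this is immediate since $\{v_1,v_2,v_3\}$ is a basis for $R$ and each $c_{v_i}(x)$ already lies in $\mathbb{F}_q[x]$.
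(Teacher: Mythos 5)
Your proposal is correct and follows essentially the same argument as the paper: both directions reduce divisibility componentwise via the orthogonal-idempotent decomposition, with the witness $c(x)$ (the paper's $e(x)$) projected down or assembled back up exactly as you describe. The only cosmetic difference is that you invoke the homomorphism property of $P_{v_i}$ where the paper expands the product $\left(\sum_i e_{v_i}(x)v_i\right)\left(\sum_i a_{v_i}(x)v_i\right)$ explicitly; these are the same computation.
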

\begin{proof}
If $a(x)|b(x)$, we can set $b(x)=e(x)a(x)$ with $e(x) \in R[x]$. Hence $$\sum_{i=1}^{3}b_{v_i}(x)v_i=\left( \sum_{i=1}^{3}e_{v_i}(x)v_i\right) \left( \sum_{i=1}^{3}a_{v_i}(x)v_i\right) =\sum_{i=1}^{3}(e_{v_i}(x)a_{v_i}(x))v_i.$$
Then we have $b_{v_i}(x)=e_{v_i}(x)a_{v_i}(x), i=1,2,3$. This means that $a_{v_i}(x)|b_{v_i}(x), i=1,2,3$ in $\mathbb{F}_{q}[x]$.

On the contrary, due to $a_{v_i}(x)|b_{v_i}(x),i=1,2,3$, we can set $b_{v_i}(x)=e_{v_i}(x)a_{v_i}(x)$ with $e_{v_i}(x) \in \mathbb{F}_q[x]$ $i=1,2,3$. Then $$b(x)=\sum_{i=1}^{3}b_{v_i}(x)v_i=\sum_{i=1}^{3}e_{v_i}(x)a_{v_i}(x)v_i=\left( \sum_{i=1}^{3}e_{v_i}(x)v_i\right) \left( \sum_{i=1}^{3}a_{v_i}(x)v_i\right) =\left( \sum_{i=1}^{3}e_{v_i}(x)v_i\right) a(x).$$
Let $e(x)=\sum_{i=1}^{3}e_{v_i}(x)v_i \in R[x]$. It shows that $a(x)|b(x)$ in $R[x]$.
\end{proof}
\begin{remark}
From Lemma 1, it also follows that $P_{v_i}(\frac{b(x)}{a(x)})=e_{v_i}(x)=\frac{P_{v_i}(b(x))}{P_{v_i}(a(x))}$, $i=1,2,3,$ when $a(x),b(x) \in R[x]$, $a(x)|b(x)$.
\end{remark}
For the sake of the divisible form, we also have the following result about the greatest common divisor for any two elements in $R[x]$. 
\begin{lemma}
Let $a(x),b(x) \in R[x]$ with $a(x)=\sum_{i=1}^{3}a_{v_i}(x)v_i, b(x)=\sum_{i=1}^{3}b_{v_i}(x)v_i$. Then, in $R[x]$, we have $$\gcd(a(x),b(x))=\gcd\left( \sum_{i=1}^{3}a_{v_i}(x)v_i,\sum_{i=1}^{3}b_{v_i}(x)v_i\right) =\sum_{i=1}^{3}\gcd(a_{v_i}(x),b_{v_i}(x))v_i,$$ 
where the $\gcd(-,-)$ on the right side of the formula represents the greatest common factor of polynomials in $\mathbb{F}_q[x]$. 
\end{lemma}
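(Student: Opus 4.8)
The plan is to verify directly that the element $d(x):=\sum_{i=1}^{3}\gcd(a_{v_i}(x),b_{v_i}(x))v_i$ satisfies the two defining properties of a greatest common divisor in $R[x]$, using Lemma 1 to translate every divisibility statement into its three $\mathbb{F}_q[x]$-components. Writing $d_{v_i}(x)=\gcd(a_{v_i}(x),b_{v_i}(x))$, I would first observe that each $d_{v_i}(x)$ divides both $a_{v_i}(x)$ and $b_{v_i}(x)$ in $\mathbb{F}_q[x]$ by the definition of gcd there; Lemma 1 then immediately upgrades this to $d(x)\mid a(x)$ and $d(x)\mid b(x)$ in $R[x]$, so that $d(x)$ is a common divisor.

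Next I would show $d(x)$ is the greatest such. Let $c(x)=\sum_{i=1}^{3}c_{v_i}(x)v_i$ be any common divisor of $a(x)$ and $b(x)$ in $R[x]$. By Lemma 1 this means $c_{v_i}(x)\mid a_{v_i}(x)$ and $c_{v_i}(x)\mid b_{v_i}(x)$ for each $i$. Because $\mathbb{F}_q[x]$ is a principal ideal domain, the defining property of $\gcd$ there forces $c_{v_i}(x)\mid d_{v_i}(x)$ for each $i$. Applying Lemma 1 in the reverse direction yields $c(x)\mid d(x)$. Hence $d(x)$ divides $a(x)$ and $b(x)$ and is divisible by every common divisor, so it is a greatest common divisor and the claimed formula holds (up to a unit of $R[x]$, the usual ambiguity in any gcd statement).

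The only genuinely delicate point is conceptual rather than computational: $R[x]$ is not an integral domain, since the $v_i$ are orthogonal idempotents and hence zero divisors, so ``greatest'' must be read in the divisibility order — every common divisor divides $d(x)$ — rather than through degrees. Fortunately the componentwise decomposition $R[x]=\bigoplus_{i=1}^{3}\mathbb{F}_q[x]v_i$ reduces the whole problem to three independent gcd computations in the PID $\mathbb{F}_q[x]$, where the notion is standard, and Lemma 1 is precisely the bridge that transports divisibility back and forth between $R[x]$ and its three coordinate rings. I therefore expect no real obstacle beyond stating the two divisibility directions carefully and invoking Lemma 1 in each.
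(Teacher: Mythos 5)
Your proposal is correct and follows essentially the same route as the paper's own proof: define the candidate $\sum_{i=1}^{3}\gcd(a_{v_i}(x),b_{v_i}(x))v_i$, use Lemma 1 to show it is a common divisor, and use Lemma 1 again to show every common divisor divides it. Your added remark that ``greatest'' must be interpreted in the divisibility order because $R[x]$ has zero divisors is a worthwhile clarification the paper leaves implicit, but it does not change the argument.
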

\begin{proof}
Denote $\sum_{i=1}^{3}\gcd(a_{v_i}(x),b_{v_i}(x))v_i$ by $m(x)$, then $m(x) \in R[x]$. From the conventional polynomial theory, we know that $\gcd(a_{v_i}(x),b_{v_i}(x))|a_{v_i}(x)$, $\gcd(a_{v_i}(x),b_{v_i}(x))|b_{v_i}(x)$, $i=1,2,3$. For Lemma 1, we have $m(x)|a(x)$, $m(x)|b(x)$ . This follows that $m(x)$ is a common divisor of $a(x)$ and $b(x)$.

Let $r(x) \in R[x]$ with $r(x)|a(x)$, $r(x)|b(x)$. We can set $r(x)=\sum_{i=1}^{3}r_{v_i}(x)v_i$. For the same reason of Lemma 1, we have $r_{v_i}(x)|a_{v_i}(x)$, $r_{v_i}(x)|b_{v_i}(x)$, $i=1,2,3$. Thus $r_{v_i}(x)|\gcd(a_{v_i}(x),b_{v_i}(x))$ in $\mathbb{F}_q[x]$, $i=1,2,3$. Consequently, $r(x)|m(x)$. These reveal that every common divisor of $a(x)$ and $b(x)$ is a divisor of $m(x)$. 

In conclusion, we have $\sum_{i=1}^{3}\gcd(a_{v_i}(x),b_{v_i}(x))v_i=m(x)=\gcd(a(x),b(x))$.     	
\end{proof}
\begin{remark}
From the process of above proof, we also have $$P_{v_i}(\gcd(a(x),b(x)))=m_{v_i}(x)=\gcd(P_{v_i}(a(x)),P_{v_i}(b(x))), i=1,2,3.$$
\end{remark}

Summarize the above statement, we have 
$$R[x]/\left\langle r[x] \right\rangle =(\oplus_{i=1}^{3} \mathbb{F}_{q}[x]v_i) \bigg/ \left\langle  \sum_{i=1}^{3} r_{v_i}(x)v_i \right\rangle  = \oplus_{i=1}^{3} (\mathbb{F}_{q}[x]/ \left\langle r_{v_i}(x) \right\rangle )v_i,$$
where $r(x) \in R[x]$ with $r(x)=\sum_{i=1}^{3}r_{v_i}(x)v_i$.
 
About single cyclic codes over $R$, we list some conclusions, see \cite{Melakhessou}, which will be used to obtain our results.
 
For a linear code $C$ of length $n$ over $R$, define $$\begin{cases}
C_1=\left\lbrace a \in \mathbb{F}_q|\exists b,c \in \mathbb{F}_q \text{ } \text{such that} \text{ } av_1+bv_2+cv_3 \in C \right\rbrace, \\
C_2=\left\lbrace b \in \mathbb{F}_q|\exists a,c \in \mathbb{F}_q \text{ } \text{such that} \text{ } av_1+bv_2+cv_3 \in C \right\rbrace, \\
C_3=\left\lbrace c \in \mathbb{F}_q|\exists a,b \in \mathbb{F}_q \text{ } \text{such that} \text{ } av_1+bv_2+cv_3 \in C \right\rbrace. \\
\end{cases}$$
\noindent
\textbf{Facts 1}(\cite{Melakhessou}) Let $C=v_1C_1 \oplus v_2C_2 \oplus v_3C_3$ be a linear code of length $n$ over $R$. Then $C$ is a cyclic code of length $n$ over $R$ if and only if $C_i,i=1,2,3$ are cyclic codes of length $n$ over $\mathbb{F}_q$.

\noindent
\textbf{Facts 2}(\cite{Melakhessou}) If $C=v_1C_1 \oplus v_2C_2 \oplus v_3C_3$ is a cyclic code of length $n$ over $R$, then $|C|=\sum_{i=1}^{3}|C_i|$

\noindent
\textbf{Facts 3}(\cite{Melakhessou}) If $C=v_1C_1 \oplus v_2C_2 \oplus v_3C_3$ is a cyclic code of length $n$ over $R$, then $$C^{\bot}=v_1C_1^{\bot} \oplus v_2C_2^{\bot} \oplus v_3C_3^{\bot} \text{ } \text{and} \text{ } |C^{\bot}|=\sum_{i=1}^{3}|C_i^{\bot}|.$$ 

\section{Double cyclic codes}
\begin{definition}
Let $C$ be an $R$-linear code of length $m+n$. Code $C$ called a double cyclic code of length $(m,n)$ over $R$ if for any $$c=(c^1_0,\cdots,c^1_{m-2},c^1_{m-1}|c^2_0,\cdots,c^2_{n-2},c^2_{n-1}) \in C,$$ implies the cyclic shift $$\mathsf{T}(c)=(c^1_{m-1},c^1_0,\cdots,c^1_{m-2}|c^2_{n-1},c^2_0,\cdots,c^2_{n-2}) \in C.$$  
\end{definition}
From the definition of double cyclic codes over $R$, we obtain that the double cyclic code $C$ can be viewed as an $R$-submodule of $R^m \times R^n$.

Let $c, d \in R^m \times R^n$ with $c=(c^1_0,\cdots,c^1_{m-1}|c^2_0,\cdots,c^2_{n-1}), d=(d^1_0,\cdots,d^1_{m-1}|d^2_0,\cdots,d^2_{n-1})$, we define the inner product of these two elements for $\left\langle c,d \right\rangle \triangleq \sum_{i=0}^{m-1} c^1_id^1_i+\sum_{j=0}^{n-1} c^2_jd^2_j$.

For the double cyclic code $C$, the dual code is defined as $$C^{\bot}=\left\lbrace d \in R^m \times R^n|\left\langle d,c \right\rangle =0, \text{ } \forall c \in C\right\rbrace.$$

For $R^m \times R^n$, define two coordinate projections as  $$\begin{cases}
P_m:R^m \times R^n \to R^m, &(r^1_1, \cdots, r^1_m|r^2_1, \cdots, r^2_n) \mapsto (r^1_1, \cdots, r^1_m), \\
P_n:R^m \times R^n \to R^n, &(r^1_1, \cdots, r^1_m|r^2_1, \cdots, r^2_n) \mapsto (r^2_1, \cdots, r^2_n). \\
\end{cases}$$
For $f(x) \in \mathbb{F}_q[x]$, owing to $\sum_{i=1}^{3}v_i=1$, we can view it as $f(x)=\sum_{i=1}^{3}f(x)v_i$. This guarantees that for every $f(x) \in \mathbb{F}_q[x]$, the quotient ring $R[x]/\left\langle f(x) \right\rangle$ is well defined. Then for $x^m-1, x^n-1 \in \mathbb{F}_q[x]$, let
$$R_m[x] \triangleq R[x]/\left\langle x^m-1 \right\rangle; R_{m,n}[x] \triangleq \left( R[x]/\left\langle x^m-1\right\rangle \right) \times \left( R[x]/\left\langle x^n-1\right\rangle \right); R_n[x] \triangleq R[x]/\left\langle x^n-1\right\rangle.$$
The multiplication of $R[x]$ can induce the action  of $R[x]$ on  $R_m[x]$, $R_{m,n}[x]$, $R_n[x]$ naturally. The rings $R_m[x]$, $R_{m,n}[x]$, $R_n[x]$ become an $R[x]$-module toward with this action. There are also two coordinate projectios for the bijection between  $R^m \times R^n$ to $R_{m,n}[x]$ given by $$(c^1_0,c^1_1, \cdots ,c^1_{m-1}|c^2_0,c^2_1, \cdots ,c^2_{n-1}) \mapsto (c^1_0+c^1_1x+ \cdots +c^1_{m-1}x^{m-1}|c^2_0+c^2_1x+ \cdots +c^2_{n-1}x^{n-1}).$$
They are
$$\begin{cases}
P_m: R_{m,n}[x] \rightarrow R_m[x]&(a(x)|b(x)) \mapsto a(x), \\
P_n: R_{m,n}[x] \rightarrow R_n[x]&(a(x)|q(x)) \mapsto b(x). \\
\end{cases}$$
Then $P_m, P_n$ are still $R[x]$-module homomorphism. Similarly, basing on the one-to-one correspondence between $R^m \times R^n$ and $R_{m,n}[x]$, it reveals the fact that $C$ is a double cyclic code of length $(m,n)$ over $R$ if and only if the corresponding polynomial sets is an $R[x]$-submodule of $R_{m,n}[x]$. Then we can be concern about $R[x]$-submodule to $R_{m,n}[x]$ and regard $R$-double cyclic codes as the $R[x]$-submodule of $R_{m,n}[x]$. At the same time, we get two types of projections: the canonical projections $P_{v_i}$, $i=1,2,3$ and the coordinate projections $P_m, P_n$. In the future, unless otherwise specified, someone mathematical object appear the subscripts of $m,n,v_1,v_2,v_3$, which means that we use their corresponding projections by default.

Let $l=[m,n]$(i.e. $l$ denote the least common multiple of $m$ and $n$).
\begin{proposition}
If $C$ is a double cyclic code of length $(m,n)$ over $R$, then the dual code $C^{\bot}$ is also an $R$-double cyclic code of the same length.
\end{proposition}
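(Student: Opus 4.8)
The statement to prove is that $C^{\bot}$ is again a double cyclic code of length $(m,n)$. Since the dual of any $R$-linear code is $R$-linear of the same length, $C^{\bot}$ is automatically an $R$-linear code of length $m+n$, so the only thing requiring an argument is that $C^{\bot}$ is closed under the double shift $\mathsf{T}$. The plan is therefore to fix an arbitrary $d \in C^{\bot}$ and show that $\mathsf{T}(d) \in C^{\bot}$, i.e. that $\left\langle \mathsf{T}(d),c \right\rangle = 0$ for every $c \in C$.

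The two ingredients I would isolate first are the following. First, $\mathsf{T}$ merely permutes the coordinates of $R^m \times R^n$ (cyclically within each of the two blocks), while the inner product $\left\langle c,d \right\rangle = \sum_{i=0}^{m-1} c^1_i d^1_i + \sum_{j=0}^{n-1} c^2_j d^2_j$ is a sum over all coordinates; applying the same permutation to both arguments only reorders the summands. Hence $\mathsf{T}$ is an isometry for the inner product: $\left\langle \mathsf{T}(c),\mathsf{T}(d) \right\rangle = \left\langle c,d \right\rangle$ for all $c,d$. Second, $\mathsf{T}$ has finite order equal to (a divisor of) $l=[m,n]$: shifting the first block $l$ times moves it through a multiple of $m$ steps and the second block through a multiple of $n$ steps, so $\mathsf{T}^{l} = \mathrm{id}$. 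In particular $\mathsf{T}^{-1} = \mathsf{T}^{\,l-1}$, and since $C$ is closed under $\mathsf{T}$ it is closed under every nonnegative power $\mathsf{T}^{k}$; in particular $\mathsf{T}^{\,l-1}(c) \in C$ whenever $c \in C$.

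Combining these, for $d \in C^{\bot}$ and $c \in C$ I would write $c = \mathsf{T}(\mathsf{T}^{\,l-1}(c))$ and use the isometry property to get $\left\langle \mathsf{T}(d),c \right\rangle = \left\langle \mathsf{T}(d), \mathsf{T}(\mathsf{T}^{\,l-1}(c)) \right\rangle = \left\langle d, \mathsf{T}^{\,l-1}(c) \right\rangle$. Because $\mathsf{T}^{\,l-1}(c) \in C$ and $d \in C^{\bot}$, the right-hand side is $0$. Thus $\mathsf{T}(d) \in C^{\bot}$, and $C^{\bot}$ is double cyclic of length $(m,n)$. The only delicate point, and the step I would state most carefully, is the passage from ``$C$ closed under $\mathsf{T}$'' to ``$C$ closed under $\mathsf{T}^{-1}$''; this is precisely where the finite order $\mathsf{T}^{l}=\mathrm{id}$ enters, so I would make sure to record the role of $l=[m,n]$ explicitly rather than implicitly inverting a shift.
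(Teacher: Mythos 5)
Your proposal is correct and follows essentially the same route as the paper: both arguments reduce $\left\langle \mathsf{T}(d),c\right\rangle$ to $\left\langle d,\mathsf{T}^{l-1}(c)\right\rangle$ using $\mathsf{T}^{l}=\mathrm{id}$ with $l=[m,n]$ and the closure of $C$ under powers of $\mathsf{T}$. The only cosmetic difference is that you package the key identity as the isometry property of the coordinate permutation $\mathsf{T}$, whereas the paper verifies it by writing out the two sums explicitly.
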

\begin{proof}
Let $C$ be an $R$-double cyclic code of length $(m,n)$ and $d=(d^1_0, \cdots ,d^1_{m-1}|d^2_0, \cdots ,d^2_{n-1})$ be a codeword in $C^{\bot}$. By the definition of $R$-double cyclic codes, we need to explain that the cyclic shift codeword $\mathsf{T}(d) \in C^{\bot}$. This means that we need to prove $\left\langle \mathsf{T}(d),c\right\rangle=0$ for all codeword $c \in C$. 

Let $c$ be any codeword of $C$. By mathematical induction, we can easily get $\mathsf{T}^{l-1}(c)=\mathsf{T}^{l-2}(\mathsf{T}(c)) \in C$. Note that $\mathsf{T}^l(c)=c$. Fix $c=(c^1_0,\cdots ,c^1_{m-1}|c^2_0,\cdots ,c^2_{n-1})$ and write the specific form of $T^{l-1}(c)$, we get $\mathsf{T}^{l-1}(c)=(c^1_1, \cdots ,c^1_{m-1},c^1_0|c^2_1,\cdots ,c^2_{n-1},c^2_0)$. Since $d \in C^{\bot}$ and $c \in C$, we have 
$$0=\left\langle d, \mathsf{T}^{l-1}(c) \right\rangle = d^1_0c^1_1+ \cdots +d^1_{m-2}c^1_0+d^1_{m-1}c^1_0 + d^2_0c^2_1+ \cdots +d^2_{m-2}c^2_0+d^2_{m-1}c^2_0 = \left\langle \mathsf{T}(d),c\right\rangle,$$
which implies that $\mathsf{T}(d) \in C^{\bot}$. Therefore, $C^{\bot}$ is also an $R$-double cyclic code of length $(m,n)$.
\end{proof}
\begin{proposition}
Let $C$ be an $R$-double cyclic code of length $(m,n)$. Then there exist polynomials $\iota(x),\ell(x) \in R_m[x], \theta(x) \in R_n[x]$ with
$\iota(x)=\sum_{i=1}^{3}\iota_{v_i}(x)v_i, \ell(x)=\sum_{i=1}^{3}\ell_{v_i}(x)v_i, \theta(x)=\sum_{i=1}^{3}\theta_{v_i}(x)v_i$ such that $C$ has the forms of 
$$C=\left\langle (\iota(x)|0),(\ell(x)|\theta(x)) \right\rangle=\left\langle \left( \sum_{i=1}^3\iota_{v_i}(x)v_i|0\right) ,\left( \sum_{i=1}^3\ell_{v_i}(x)v_i| \sum_{i=1}^3 \theta_{v_i}(x)v_i\right) \right\rangle,$$
where $\iota_{v_i}(x)|x^m-1, i=1,2,3$ and $\theta_{v_i}(x)|x^n-1, i=1,2,3$.
\end{proposition}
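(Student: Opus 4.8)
The plan is to build the two generators by peeling off the two coordinates of $R_{m,n}[x]$ one at a time, using the single-cyclic-code structure over $R$ (Facts 1) to describe each piece, and then to reassemble everything with the standard "subtract-and-reduce" argument for two-generator submodules. Throughout I would exploit that $P_m,P_n$ are $R[x]$-module homomorphisms, so that images and kernels of their restrictions to $C$ are again $R[x]$-submodules of the appropriate quotient rings.

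First I would study the coordinate projection $P_n : R_{m,n}[x] \to R_n[x]$. Since $C$ is an $R[x]$-submodule, the image $P_n(C)$ is an $R[x]$-submodule of $R_n[x] = R[x]/\left\langle x^n-1\right\rangle$, i.e.\ a cyclic code of length $n$ over $R$. Writing $R_n[x] = \oplus_{i=1}^{3}(\mathbb{F}_q[x]/\left\langle x^n-1\right\rangle)v_i$ and invoking Facts 1 together with the fact that each factor $\mathbb{F}_q[x]/\left\langle x^n-1\right\rangle$ is a principal ideal ring, I obtain $P_n(C) = \oplus_{i=1}^{3} v_i\left\langle\theta_{v_i}(x)\right\rangle$ with $\theta_{v_i}(x)\mid x^n-1$; the orthogonality of the idempotents then collapses this to $P_n(C) = \left\langle\theta(x)\right\rangle$ for $\theta(x)=\sum_{i=1}^{3}\theta_{v_i}(x)v_i$. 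Because $\theta(x)\in P_n(C)$, I may choose a preimage $(\ell(x)\mid\theta(x))\in C$, which supplies the second generator and its polynomial $\ell(x)$. Next I would examine $\ker(P_n|_C)=\{(a(x)\mid 0)\in C\}$, again an $R[x]$-submodule of $C$ on which $P_m$ is injective; so $P_m(\ker(P_n|_C))$ is a cyclic code of length $m$ over $R$, and by the same reasoning it is generated by $\iota(x)=\sum_{i=1}^{3}\iota_{v_i}(x)v_i$ with $\iota_{v_i}(x)\mid x^m-1$, whence $(\iota(x)\mid 0)\in C$.

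Finally I would prove the equality $C=\left\langle(\iota(x)\mid 0),(\ell(x)\mid\theta(x))\right\rangle$. The inclusion $\supseteq$ is immediate since both generators lie in $C$. For $\subseteq$, take $(a(x)\mid b(x))\in C$; then $b(x)=P_n(a(x)\mid b(x))\in\left\langle\theta(x)\right\rangle$, so $b(x)=\lambda(x)\theta(x)$ for some $\lambda(x)\in R[x]$, and $(a(x)\mid b(x))-\lambda(x)(\ell(x)\mid\theta(x))=(a(x)-\lambda(x)\ell(x)\mid 0)$ lies in $\ker(P_n|_C)$. Applying $P_m$ gives $a(x)-\lambda(x)\ell(x)\in\left\langle\iota(x)\right\rangle$, say equal to $\mu(x)\iota(x)$, so that $(a(x)\mid b(x))=\mu(x)(\iota(x)\mid 0)+\lambda(x)(\ell(x)\mid\theta(x))$, as required.

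I expect the main obstacle to be the second paragraph: making precise that $P_n(C)$ and $P_m(\ker(P_n|_C))$ really are cyclic codes over $R$, and that the single-variable structure theorem delivers generators of the exact shape $\sum_{i}\theta_{v_i}(x)v_i$ with componentwise divisibility of $x^n-1$ (resp.\ $x^m-1$). This hinges on Facts 1 and on Lemma 1, which transfers divisibility between $R[x]$ and its $v_i$-components through the idempotent decomposition. Once that structural step is secured, the module-theoretic reduction of the last paragraph is routine.
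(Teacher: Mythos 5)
Your proposal is correct and follows essentially the same route as the paper's own proof: project via $P_n$ to get $\theta(x)$ from the ideal structure of $R_n[x]$, identify $\ker(P_n|_C)$ with an ideal of $R_m[x]$ to get $\iota(x)$, and finish with the subtract-and-reduce argument. The only cosmetic difference is that you spell out the idempotent decomposition of $P_n(C)$ explicitly where the paper simply appeals to ``the structure of $R_n[x]$.''
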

\begin{proof}
Considering the coordinate projection $P_n : R_{m,n}[x] \to R_n[x] \text{ } (a(x)|b(x)) \mapsto b(x)$. Clearly, $P_n$ is an $R[x]$-module homomorphism. While the double cyclic code $C$ is an $R[x]$-submodule of $R_{m,n}[x]$. Then $P_n(C)$ is an $R[x]$-submodule of $R_n[x]$, which means that $P_n(C)$ is an ideal of $R_n[x]$. From the structure of $R_n[x]$, we can set $P_n(C)=\left\langle \theta(x) \right\rangle $ with $\theta(x)=\sum_{i=1}^{3}\theta_{v_i}(x)v_i$, where $\theta_{v_i}(x)|x^n-1$, $i=1,2,3$. Note that $\mathrm{Ker}(P_n|_{C})=\left\lbrace (c^1(x)|c^2(x)) \in C| c^2(x)=0 \right\rbrace$.  Define the set $I=\left\lbrace c^1(x) \in R_m[x]| (c^1(x)|0) \in \mathrm{Ker}(P_n|_{C}) \right\rbrace$. Obviously, $I$ is an ideal of $R_m[x]$. The same reasoning applies to the case of $R_m[x]$, we can set $I=\left\langle \iota(x) \right\rangle $ with $\iota(x)=\sum_{i=1}^{3}\iota_{v_i}(x)v_i$, where $\iota_{v_i}(x)|x^m-1$, $i=1,2,3$. For any element $(c^1(x)|0) \in \mathrm{Ker}(P_n|_{C})$, we have $c^1(x) \in I$ and there exists a polynomial $m(x) \in R[x]$ such that $c^1(x)=m(x)\iota(x)$. Thus $(c^1(x)|0)=m(x)(\iota(x)|0)$, which implies that $\mathrm{Ker}(P_n|_{C})$ is an $R[x]$-submodule of $C$ generated by $(\iota(x)|0)$. Therefore, by the First Isomorphism Theorem of Module Homomorphism, we have $C/\mathrm{Ker}(P_n|_{C}) \cong \mathrm{Img}(P_n|_{C})=P_n(C)=\left\langle \theta(x)\right\rangle $. For $\theta(x)$, let $(\ell(x)|\theta(x)) \in C$ with $P_n(\ell(x)|\theta(x))=\theta(x)$, where $\ell(x)=\sum_{i=1}^{3}\ell_{v_i}(x)v_i$, $\ell_{v_i}(x) \in \mathbb{F}_q[x]$, $i=1,2,3$. We show that $C=\left\langle (\iota(x)|0),(\ell(x)|\theta(x)) \right\rangle$ in the following proof.

Let $c(x)=(c^1(x)|c^2(x)) \in C$, we have $c^2(x)=P_n((c^1(x)|c^2(x))) \in P_n(C)$. Consequently, $\exists s(x) \in R_n[x]$ such that $c^2(x)=s(x)\theta(x)$. While $$c(x)-s(x)(\ell(x)|\theta(x))=(c^1(x)|c^2(x))-s(x)(\ell(x)|\theta(x))=(c^1(x)-s(x)\ell(x)|0) \in Ker(P_n|_{C}).$$
Then $\exists r(x) \in R_m[x]$ such that $c^1(x)-s(x)\ell(x)=r(x)\iota(x)$ and $(c^1(x)-s(x)\ell(x)|0)=r(x)(\iota(x)|0)$. Hence $$c(x)=(c^1(x)|c^2(x))=r(x)(\iota(x)|0)+s(x)(\ell(x)|\theta(x)).$$ These indicate that $C$ is finitely generated by $\left\lbrace(\iota(x)|0), (\ell(x)|\theta(x)) \right\rbrace $.     
\end{proof}
\begin{lemma}
Let $C \in \mathfrak{C}_{m,n}(R)$ with 
$$C=\left\langle (\iota(x)|0),(\ell(x)|\theta(x)) \right\rangle=\left\langle \left( \sum_{i=1}^3\iota_{v_i}(x)v_i|0\right) ,\left( \sum_{i=1}^3\ell_{v_i}(x)v_i| \sum_{i=1}^3 \theta_{v_i}(x)v_i\right) \right\rangle.$$ 
As the simplest forms of generator polynomials, we have $\deg(\ell_{v_i}(x)) < \deg(\iota_{v_i}(x))$, $i=1,2,3.$
\end{lemma}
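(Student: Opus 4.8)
The plan is to show that the generating set produced in Proposition~2 can always be normalized so that the stated degree inequalities hold, without altering the code $C$; this is what the phrase ``simplest forms'' is meant to capture. The key observation is that $(\iota(x)|0) \in C$, so for any $q(x) \in R_m[x]$ the codeword
$$(\ell(x)|\theta(x)) - q(x)(\iota(x)|0) = (\ell(x) - q(x)\iota(x)\,|\,\theta(x))$$
again lies in $C$, and replacing $(\ell(x)|\theta(x))$ by this new codeword leaves the submodule $\left\langle (\iota(x)|0),\,(\ell(x)|\theta(x)) \right\rangle$ unchanged, since the two generating sets are related by an invertible unipotent change of generators. Thus it suffices to choose $q(x)$ so that $\ell(x) - q(x)\iota(x)$ has each $v_i$-component of degree strictly less than $\deg(\iota_{v_i}(x))$.

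First I would pass to the componentwise picture afforded by the decomposition $R[x] = \oplus_{i=1}^{3}\mathbb{F}_q[x]v_i$. Since $\iota_{v_i}(x) \mid x^m - 1$ in $\mathbb{F}_q[x]$, each $\iota_{v_i}(x)$ is a nonzero (monic) polynomial, so the ordinary division algorithm in $\mathbb{F}_q[x]$ applies: write
$$\ell_{v_i}(x) = q_{v_i}(x)\iota_{v_i}(x) + r_{v_i}(x), \qquad \deg(r_{v_i}(x)) < \deg(\iota_{v_i}(x)), \quad i = 1,2,3.$$
Setting $q(x) = \sum_{i=1}^{3} q_{v_i}(x)v_i$, the multiplicativity of the projections $P_{v_i}$ (established before Lemma~1) gives $\ell(x) - q(x)\iota(x) = \sum_{i=1}^{3} r_{v_i}(x)v_i$, whose $v_i$-component $r_{v_i}(x)$ satisfies the required bound. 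Relabeling this reduced polynomial as $\ell(x)$ yields generators in the desired simplest form.

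The step I expect to need the most care is verifying that the reduction carried out in the free polynomial ring $\mathbb{F}_q[x]$ is compatible with working modulo $x^m - 1$ in $R_m[x]$, i.e.\ that forming the product $q(x)\iota(x)$ requires no wraparound reduction. This follows from a degree count: the representative $\ell_{v_i}(x) = P_{v_i}(\ell(x))$ has degree $< m$, and since $\deg(r_{v_i}(x)) < \deg(\iota_{v_i}(x)) \le m$, the difference $q_{v_i}(x)\iota_{v_i}(x) = \ell_{v_i}(x) - r_{v_i}(x)$ also has degree $< m$ in each component. Hence the equality $\ell(x) - q(x)\iota(x) = \sum_{i=1}^{3} r_{v_i}(x)v_i$ already holds at the level of canonical representatives in $R_m[x]$, with no information lost upon reduction by $x^m - 1$. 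Once this compatibility is confirmed, the inequalities $\deg(\ell_{v_i}(x)) < \deg(\iota_{v_i}(x))$ for $i=1,2,3$ follow immediately, with the degenerate cases (a unit $\iota_{v_i}(x)$, forcing $r_{v_i}(x)=0$) covered by the convention $\deg(0) = -\infty$.
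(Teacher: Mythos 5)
Your proof is correct and follows essentially the same route as the paper: both exploit that $(\iota(x)|0)\in C$ so that $\ell(x)$ may be reduced modulo $\iota(x)$ componentwise without changing the generated submodule. The only difference is presentational — the paper performs the reduction one leading-term subtraction at a time (replacing $\ell_{v_3}(x)$ by $\ell_{v_3}(x)-x^k\iota_{v_3}(x)$ and arguing $D=C$ by double inclusion, then iterating), whereas you carry out the full Euclidean division in each component in a single step.
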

\begin{proof}
If not, $\exists i \in\left\lbrace 1,2,3 \right\rbrace $ such that $\deg(\ell_{v_i}(x)) \geq \deg(\iota_{v_i}(x))$. There is no loss of generating in assuming $\deg(\ell_{v_3}(x)) \geq \deg(\iota_{v_3}(x))$. Fix $k=\deg(\ell_{v_3}(x))-\deg(\iota_{v_3}(x))$, then $k \ge 0$. Let $D=\left\langle \left( \sum_{i=1}^{3}\iota_{v_i}(x)v_i|0\right) ,\left( \sum_{i=1}^{2}\ell_{v_i}(x)v_i+(\ell_{v_3}(x)-x^k \iota_{v_3}(x))v_3|\sum_{i=1}^{3}\theta_{v_i}(x)v_i\right)  \right\rangle $. It is obvious that $\deg(\ell_{v_3}(x)-x^k \iota_{v_3}(x)) < \deg(\ell_{v_3}(x))$. Since the generators of $D$ belong to $C$, we have $D \subset C$. On the other hand $$( \sum_{i=1}^{3}\ell_{v_i}(x)v_i|\sum_{i=1}^{3}\theta_{v_i}(x)v_i) =( \sum_{i=1}^{2}\ell_{v_i}(x)v_i+(\ell_{v_3}(x)-x^k \iota_{v_3}(x))v_3|\sum_{i=1}^{3}\theta_{v_i}(x)v_i) +v_3x^k( \sum_{i=1}^{3}\iota_{v_i}(x)v_i|0) .$$
Then $\left( \sum_{i=1}^{3}\ell_{v_i}(x)v_i|\sum_{i=1}^{3}\theta_v(x)v_i\right)  \subset D$. This shows that $C \subset D$. Consequently, $D=C$. By decreasing $\deg(\ell_{v_3}(x))$, we end up with $\deg(\ell_{v_3}(x)) < \deg(\iota_{v_3}(x))$.
\end{proof}
\begin{lemma}
Let $C$ be a double cyclic code of length $(m,n)$ over $R$, and set 
$$C=\left\langle (\iota(x)|0),(\ell(x)|\theta(x)) \right\rangle=\left\langle \left( \sum_{i=1}^3\iota_{v_i}(x)v_i|0\right) ,\left( \sum_{i=1}^3\ell_{v_i}(x)v_i| \sum_{i=1}^3 \theta_{v_i}(x)v_i\right) \right\rangle. $$
Then $\iota_{v_i}(x)|\frac{x^n-1}{\theta_{v_i}(x)}\ell_{v_i}(x)$, $i=1,2,3$.
\end{lemma}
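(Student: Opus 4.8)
The plan is to produce, for each $i$, the asserted divisibility by exhibiting a single codeword of $C$ whose second coordinate vanishes and whose first coordinate is (a representative of) $\frac{x^n-1}{\theta(x)}\ell(x)$, and then reading the statement off componentwise through the projections $P_{v_i}$. First I would form the element $\frac{x^n-1}{\theta(x)}=\sum_{i=1}^{3}\frac{x^n-1}{\theta_{v_i}(x)}v_i \in R[x]$, which is legitimate because $\theta_{v_i}(x)\mid x^n-1$ for each $i$, so that $\theta(x)\mid x^n-1$ in $R[x]$ by Lemma 1 and $P_{v_i}\!\left(\frac{x^n-1}{\theta(x)}\right)=\frac{x^n-1}{\theta_{v_i}(x)}$ by Remark 1. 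Since $R_{m,n}[x]$ is an $R[x]$-module and $C$ is an $R[x]$-submodule, multiplying the generator $(\ell(x)|\theta(x))$ by this element stays inside $C$. Computing the action coordinatewise, the second coordinate becomes $\frac{x^n-1}{\theta(x)}\theta(x)=x^n-1\equiv 0$ in $R_n[x]$, so I obtain $\left(\frac{x^n-1}{\theta(x)}\ell(x)\,\big|\,0\right)\in C$.

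Next I would invoke the structural description established in Proposition 2: the element just produced lies in $\mathrm{Ker}(P_n|_{C})$, which is generated as an $R[x]$-submodule by $(\iota(x)|0)$. Hence $\frac{x^n-1}{\theta(x)}\ell(x)$ lies in the ideal $\langle\iota(x)\rangle$ of $R_m[x]$. Applying the ring homomorphism $P_{v_i}$ and using that it carries $\langle\iota(x)\rangle$ onto $\langle\iota_{v_i}(x)\rangle$ in $\mathbb{F}_q[x]/\langle x^m-1\rangle$, I find that $\frac{x^n-1}{\theta_{v_i}(x)}\ell_{v_i}(x)$ represents an element of $\langle\iota_{v_i}(x)\rangle$ modulo $x^m-1$, for each $i=1,2,3$.

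The final and most delicate step is to upgrade this ideal membership modulo $x^m-1$ into genuine divisibility in $\mathbb{F}_q[x]$, and this is where I expect the real obstacle to lie, since a priori reduction modulo $x^m-1$ could destroy divisibility. Here the hypothesis $\iota_{v_i}(x)\mid x^m-1$ is exactly what saves the argument: writing $x^m-1=\iota_{v_i}(x)\cdot\frac{x^m-1}{\iota_{v_i}(x)}$, any polynomial congruent to a multiple of $\iota_{v_i}(x)$ modulo $x^m-1$ is itself a multiple of $\iota_{v_i}(x)$, because both $\iota_{v_i}(x)$ and $x^m-1$ carry the factor $\iota_{v_i}(x)$. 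This yields $\iota_{v_i}(x)\mid\frac{x^n-1}{\theta_{v_i}(x)}\ell_{v_i}(x)$ in $\mathbb{F}_q[x]$ for $i=1,2,3$, completing the argument.
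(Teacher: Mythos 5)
Your proof is correct and follows essentially the same route as the paper: multiply the generator $(\ell(x)|\theta(x))$ by $\frac{x^n-1}{\theta(x)}$ to land in $\mathrm{Ker}(P_n|_{C})=\langle(\iota(x)|0)\rangle$, then pass to components via $P_{v_i}$. Your final step, upgrading ideal membership modulo $x^m-1$ to genuine divisibility in $\mathbb{F}_q[x]$ using $\iota_{v_i}(x)\mid x^m-1$, is a detail the paper leaves implicit, and it is a worthwhile addition.
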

\begin{proof}
By the proof of Proposition 2, we know that $\mathrm{Ker}(P_n|_{C})=\left\langle (\iota(x)|0) \right\rangle$, where $P_n|_C$ is the second coordinate projection restricted to $C$. We concern about the codeword to $\frac{x^n-1}{\theta(x)} (\ell(x)|\theta(x))$. 

Since $\frac{x^n-1}{\theta(x)} (\ell(x)|\theta(x))=(\frac{x^n-1}{\theta(x)}\ell(x)|0) \in \mathrm{Ker}(P_n|_{C})$, we have $\iota(x)|\frac{x^n-1}{\theta(x)}\ell(x)$. From Lemma 1, we get $\iota_{v_i}(x)|\frac{x^n-1}{\theta_{v_i}(x)}\ell_v(x)$, $i=1,2,3$. 
\end{proof}
\begin{lemma}
Let $C$ be a double cyclic code of length $(m,n)$ over $R$, and set  
$$C=\left\langle (\iota(x)|0),(\ell(x)|\theta(x)) \right\rangle=\left\langle \left( \sum_{i=1}^3\iota_{v_i}(x)v_i|0\right) ,\left( \sum_{i=1}^3\ell_{v_i}(x)v_i| \sum_{i=1}^3 \theta_{v_i}(x)v_i\right) \right\rangle.$$
Then $\iota_{v_i}(x)|\frac{x^n-1}{\theta_{v_i}(x)}\gcd(\iota_{v_i}(x),\ell_{v_i}(x)), i=1,2,3$.
\end{lemma}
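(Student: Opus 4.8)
The plan is to reduce the claim to a componentwise statement in $\mathbb{F}_q[x]$ and then exploit the universal property of the gcd together with the fact that $\mathbb{F}_q[x]$ is a unique factorization domain. Fix $i \in \{1,2,3\}$ and abbreviate $\iota_{v_i}(x), \ell_{v_i}(x), \theta_{v_i}(x)$ by $\iota, \ell, \theta$, and set $g = \gcd(\iota,\ell)$. Write $\iota = g\iota'$ and $\ell = g\ell'$ with $\gcd(\iota',\ell')=1$. The target divisibility $\iota \mid \frac{x^n-1}{\theta}g$ is then equivalent, after cancelling the nonzero factor $g$ in the integral domain $\mathbb{F}_q[x]$, to the simpler coprime statement $\iota' \mid \frac{x^n-1}{\theta}$; note that $\iota \neq 0$ since $\iota \mid x^m-1$, so $g \neq 0$ and the cancellation is legitimate.

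First I would invoke the preceding lemma, which supplies $\iota \mid \frac{x^n-1}{\theta}\ell$. Substituting $\iota = g\iota'$ and $\ell = g\ell'$ turns this into $g\iota' \mid \frac{x^n-1}{\theta}g\ell'$; cancelling $g$ yields $\iota' \mid \frac{x^n-1}{\theta}\ell'$. Since $\gcd(\iota',\ell')=1$, the Euclid-type implication that $a \mid bc$ together with $\gcd(a,b)=1$ forces $a \mid c$ now gives $\iota' \mid \frac{x^n-1}{\theta}$. Multiplying both sides back by $g$ recovers $\iota = g\iota' \mid \frac{x^n-1}{\theta}g = \frac{x^n-1}{\theta}\gcd(\iota,\ell)$, which is precisely the assertion for the index $i$. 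Running the identical argument over $i=1,2,3$ completes the proof.

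The only genuine content, and the step I would watch most carefully, is this cancellation-and-coprimality maneuver: the reduction of a divisibility among three factors to one among two coprime factors relies on $\mathbb{F}_q[x]$ being an integral domain (so that $g$ cancels from a divisibility relation) and on the coprimality implication valid in a principal ideal domain. Everything else is routine bookkeeping — the passage from $R[x]$ to its three field components is handled by Lemma 1, and the starting divisibility is exactly the preceding lemma. I would therefore carry out the $\mathbb{F}_q[x]$ computation in full for a single representative index and simply remark that the remaining two indices are treated verbatim.
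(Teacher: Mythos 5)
Your proof is correct, and it leans on the same two pillars as the paper's: the componentwise reduction to $\mathbb{F}_q[x]$ and, crucially, the preceding lemma's divisibility $\iota_{v_i}(x)\mid\frac{x^n-1}{\theta_{v_i}(x)}\ell_{v_i}(x)$. Where you diverge is in the final algebraic step. The paper never factors out the gcd: it simply observes that $\iota_{v_i}(x)$ trivially divides $\frac{x^n-1}{\theta_{v_i}(x)}\iota_{v_i}(x)$ as well, hence divides $\gcd\bigl(\frac{x^n-1}{\theta_{v_i}(x)}\iota_{v_i}(x),\,\frac{x^n-1}{\theta_{v_i}(x)}\ell_{v_i}(x)\bigr)$, which by distributivity of the gcd over a common factor equals (up to a unit) $\frac{x^n-1}{\theta_{v_i}(x)}\gcd(\iota_{v_i}(x),\ell_{v_i}(x))$. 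You instead write $\iota=g\iota'$, $\ell=g\ell'$, cancel $g$, and invoke Euclid's lemma to get $\iota'\mid\frac{x^n-1}{\theta}$ before multiplying $g$ back in. Both arguments are sound one-liners in $\mathbb{F}_q[x]$; the paper's buys a slightly shorter derivation that needs only ``divides both, hence divides the gcd'' plus $\gcd(ca,cb)=c\gcd(a,b)$, while yours isolates the cleaner intermediate fact $\frac{\iota_{v_i}(x)}{\gcd(\iota_{v_i}(x),\ell_{v_i}(x))}\,\Big|\,\frac{x^n-1}{\theta_{v_i}(x)}$, which is actually the form the paper later uses implicitly in Proposition 6, so your route makes that later step more transparent. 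Your cancellation and coprimality steps are correctly justified ($g\neq 0$ in the integral domain $\mathbb{F}_q[x]$, Euclid's lemma in a PID), so there is no gap.
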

\begin{proof}
From Proposition 2, we have $\theta_{v_i}(x)|x^n-1$, $i=1,2,3$. These follow that $\iota_{v_i}(x)|\frac{x^n-1}{\theta_{v_i}(x)} \iota_{v_i}(x)$, $i=1,2,3$. By Lemma 4, we have $\iota_{v_i}(x)|\frac{x^n-1}{\theta_{v_i}(x)}\ell_{v_i}(x)$, $i=1,2,3$. Therefore $$\iota_{v_i}(x)|\gcd\left( \frac{x^n-1}{\theta_{v_i}(x)} \iota_{v_i}(x), \frac{x^n-1}{\theta_{v_i}(x)}\ell_{v_i}(x)\right)  |\frac{x^n-1}{\theta_{v_i}(x)}\gcd(\iota_{v_i}(x),\ell_{v_i}(x)), i=1,2,3.$$ Consequently, $\iota_{v_i}(x)|\frac{x^n-1}{\theta_{v_i}(x)}\gcd(\iota_{v_i}(x),\ell_{v_i}(x)), i=1,2,3$. 
\end{proof}
\begin{definition}
Let $C$ be a double cyclic code of length $(m,n)$ over $R$. If $C$ is the direct product of $C_m$ and $C_n$, then $C$ is called a separable double cyclic code. 
\end{definition}
\begin{lemma}
If $C=\left\langle \left( \sum_{i=1}^3\iota_{v_i}(x)v_i|0\right) ,\left( \sum_{i=1}^3\ell_{v_i}(x)v_i| \sum_{i=1}^3 \theta_{v_i}(x)v_i\right) \right\rangle$ is a separable $R$-double cyclic code, then $\ell_{v_i}(x)=0, i=1,2,3$.
\end{lemma}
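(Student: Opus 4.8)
The plan is to exploit separability to manufacture the codeword $(\ell(x)|0)$ inside $C$, place it in $\mathrm{Ker}(P_n|_{C})$, and then play the resulting divisibility against the degree bound of Lemma 3. First I would recall the two facts established in the proof of Proposition 2: that $P_n(C)=\left\langle \theta(x)\right\rangle = C_n$ and that $\mathrm{Ker}(P_n|_{C})=\left\langle (\iota(x)|0)\right\rangle$. Since $C$ is separable, $C=C_m \times C_n$; as $0 \in C_m$ and $\theta(x)\in C_n$, this forces $(0|\theta(x))\in C$. Subtracting this from the generator $(\ell(x)|\theta(x))\in C$ yields $(\ell(x)|0)=(\ell(x)|\theta(x))-(0|\theta(x))\in C$. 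Because its second coordinate vanishes, $(\ell(x)|0)\in \mathrm{Ker}(P_n|_{C})=\left\langle (\iota(x)|0)\right\rangle$, so there is an $r(x)\in R[x]$ with $(\ell(x)|0)=r(x)(\iota(x)|0)$, that is, $\ell(x)\equiv r(x)\iota(x) \pmod{x^m-1}$ in $R_m[x]$.

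Next I would pass to the $v_i$-components. Applying the projections $P_{v_i}$ (equivalently, using that they are $\mathbb{F}_q[x]$-homomorphisms together with Lemma 1), the congruence becomes $\ell_{v_i}(x)\equiv r_{v_i}(x)\iota_{v_i}(x)\pmod{x^m-1}$ in $\mathbb{F}_q[x]$ for each $i$. Here is the one point that needs care, namely lifting from the quotient ring back to genuine divisibility in $\mathbb{F}_q[x]$. This is exactly where $\iota_{v_i}(x)\mid x^m-1$, guaranteed by Proposition 2, does the work: writing $\ell_{v_i}(x)=r_{v_i}(x)\iota_{v_i}(x)+s(x)(x^m-1)$ in $\mathbb{F}_q[x]$ and using $\iota_{v_i}(x)\mid x^m-1$, both summands on the right are multiples of $\iota_{v_i}(x)$, whence $\iota_{v_i}(x)\mid \ell_{v_i}(x)$ as an honest polynomial divisibility in $\mathbb{F}_q[x]$.

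Finally I would invoke Lemma 3, which supplies $\deg(\ell_{v_i}(x))<\deg(\iota_{v_i}(x))$ for each $i$. A nonzero multiple of $\iota_{v_i}(x)$ has degree at least $\deg(\iota_{v_i}(x))$, so the only polynomial simultaneously divisible by $\iota_{v_i}(x)$ and of strictly smaller degree is the zero polynomial. Hence $\ell_{v_i}(x)=0$ for $i=1,2,3$, as claimed. I expect the only genuine obstacle to be the modular-to-literal divisibility step above; the rest is bookkeeping with the decomposition $R=\oplus_{i=1}^{3}\mathbb{F}_q v_i$ and with the kernel description from Proposition 2.
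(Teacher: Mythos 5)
Your proof is correct. The paper itself offers no argument for this lemma --- it simply declares the proof ``straightforward'' from the definition of separability --- so there is no approach to compare against; your write-up supplies exactly the detail the paper omits. Every step checks: separability gives $(0|\theta(x))\in C_m\times C_n=C$, hence $(\ell(x)|0)=(\ell(x)|\theta(x))-(0|\theta(x))$ lies in $\mathrm{Ker}(P_n|_{C})=\left\langle(\iota(x)|0)\right\rangle$; the passage from the congruence $\ell_{v_i}(x)\equiv r_{v_i}(x)\iota_{v_i}(x)\pmod{x^m-1}$ to the literal divisibility $\iota_{v_i}(x)\mid\ell_{v_i}(x)$ in $\mathbb{F}_q[x]$ is legitimate precisely because $\iota_{v_i}(x)\mid x^m-1$ (Proposition 2), which you correctly identify as the one delicate point; and the degree bound of Lemma 3 then forces $\ell_{v_i}(x)=0$. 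The only caveat worth stating explicitly is that the conclusion relies on the generators being in the reduced form guaranteed by Lemma 3 (otherwise one could add multiples of $\iota(x)$ to $\ell(x)$ and destroy the vanishing), but since that normalization is the paper's standing convention for these generator sets, this is not a gap in your argument.
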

\begin{proof}
By the definition of separable $R$-double cyclic codes, the proof is straightforward.
\end{proof}

Combining with some propositions and lemmas in this section, we get the first significant theorem of this paper.
\begin{theorem}
Let $C$ be an $R$-double cyclic code of length $(m,n)$, and set $C,C^{\bot}$ have the forms of $$\begin{cases}
C&=\left\langle (\iota(x)|0),(\ell(x)|\theta(x)) \right\rangle =\left\langle \left( \sum_{i=1}^3\iota_{v_i}(x)v_i|0\right) ,\left( \sum_{i=1}^3\ell_{v_i}(x)v_i| \sum_{i=1}^3 \theta_{v_i}(x)v_i\right) \right\rangle  \\
C^{\bot}&=\left\langle (\overline{\iota}(x)|0),(\overline{\ell}(x)|\overline{\theta}(x)) \right\rangle =\left\langle \left( \sum_{i=1}^3\overline{\iota}_{v_i}(x)v_i|0\right),\left( \sum_{i=1}^3\overline{\ell}_{v_i}(x)v_i| \sum_{i=1}^3 \overline{\theta}_{v_i}(x)v_i\right) \right\rangle.  \\
\end{cases}$$
Then $\iota_{v_i}(x)|x^m-1$, $\theta_{v_i}(x)|x^n-1$, $i=1,2,3$ and
 
If $C$ is a separable $R$-double cyclic code, we have $\ell_{v_i}(x)=0, i=1,2,3$.

If $C$ is a free $R$-double cyclic code, we have $\begin{cases}
(1)\deg(\ell_{v_i}(x)) < \deg(\iota_{v_i}(x)), \\ (2) \iota_{v_i}(x)|\frac{x^n-1}{\theta_{v_i}(x)}\ell_{v_i}(x), \\ (3)\iota_{v_i}(x)|\frac{x^n-1}{\theta_{v_i}(x)}\gcd(\iota_{v_i}(x),\ell_{v_i}(x)), \\ \end{cases}$ 
$i=1,2,3.$
\end{theorem}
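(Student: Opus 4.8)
The plan is to read this theorem as a compilation of the propositions and lemmas already proved in this section, so the proof reduces to assembling those statements and checking that their hypotheses are in force. First I would produce the claimed generator forms. For $C$ itself, Proposition 2 directly supplies generators $(\iota(x)|0)$ and $(\ell(x)|\theta(x))$ with $\iota_{v_i}(x)|x^m-1$ and $\theta_{v_i}(x)|x^n-1$ for $i=1,2,3$; this already establishes the first assertion of the theorem. For the dual, I would first invoke Proposition 1 to conclude that $C^{\bot}$ is again an $R$-double cyclic code of length $(m,n)$, and then apply Proposition 2 to $C^{\bot}$ itself to obtain its generators $(\overline{\iota}(x)|0)$, $(\overline{\ell}(x)|\overline{\theta}(x))$ together with $\overline{\iota}_{v_i}(x)|x^m-1$ and $\overline{\theta}_{v_i}(x)|x^n-1$, which is exactly what is needed to say $C^{\bot}$ has the stated form.

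Next I would dispose of the two special cases. If $C$ is separable, Lemma 6 immediately yields $\ell_{v_i}(x)=0$ for $i=1,2,3$, so nothing further is required. If $C$ is free, I would handle the three conditions one at a time by citing the matching lemma: condition $(1)$, $\deg(\ell_{v_i}(x))<\deg(\iota_{v_i}(x))$, is precisely Lemma 3; condition $(2)$, $\iota_{v_i}(x)|\frac{x^n-1}{\theta_{v_i}(x)}\ell_{v_i}(x)$, is Lemma 4; and condition $(3)$, $\iota_{v_i}(x)|\frac{x^n-1}{\theta_{v_i}(x)}\gcd(\iota_{v_i}(x),\ell_{v_i}(x))$, is Lemma 5. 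Each of these was proved for a general $C\in\mathfrak{C}_{m,n}(R)$ presented in standard form, so they apply verbatim.

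Since every ingredient is already available, I do not expect a genuine obstacle; the content of the theorem is organizational rather than technical. The one point that deserves explicit care is the passage to the dual: one must certify that $C^{\bot}$ lies in the scope of Proposition 2, and this is guaranteed only because Proposition 1 first identifies $C^{\bot}$ as a double cyclic code of the same length. A secondary, minor point is that Lemma 3 should be read as fixing a normalized (reduced) choice of the generator $(\ell(x)|\theta(x))$, so that conditions $(1)$–$(3)$ refer to a well-defined representative and not to an arbitrary generating pair; I would state this normalization at the outset and then let the cited lemmas finish the argument.
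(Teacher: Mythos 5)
Your proposal is correct and matches the paper's own treatment: the paper offers no separate proof of this theorem, stating only that it follows by ``combining with some propositions and lemmas in this section,'' and the pieces you assemble (Proposition~2 for the generators and divisibility conditions, Proposition~1 to place $C^{\bot}$ in the scope of Proposition~2, Lemma~6 for the separable case, and Lemmas~3--5 for conditions (1)--(3)) are exactly the intended ingredients. Your added remarks --- that the dual must first be certified as a double cyclic code, and that Lemma~3 fixes a normalized generator --- are sensible clarifications consistent with the paper's argument.
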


\section{Generating matrices}
\begin{proposition}
Let $\thickmuskip=0mu \medmuskip=0mu C=\left\langle (\iota(x)|0),(\ell(x)|\theta(x)) \right\rangle=\left\langle ( \sum_{i=1}^3\iota_{v_i}(x)v_i|0),( \sum_{i=1}^3\ell_{v_i}(x)v_i| \sum_{i=1}^3 \theta_{v_i}(x)v_i) \right\rangle$ be a double cyclic code of length $(m,n)$ over $R$. Then $C$ is permutation equivalent to a $\mathbb{F}_{q}$-linear code with generator matrix of the form $G=\begin{pmatrix}
G_1v_1\\G_2v_2\\G_3v_3\\
\end{pmatrix}$, where 
$$
G_i=\left( {\begin{array}{c|c}
\begin{matrix}
I_{m-\deg(\iota_{v_i}(x))} & A^1_i & A^2_i \\
0                       & B^1_i & B^2_i \\
0                       & 0          & 0           \\
		
\end{matrix} &
\begin{matrix}
0               & 0        & 0                            \\
B^3_i       & I_{k_i}  & 0                            \\
M^1_i       & M^2_i &I_{n-\deg(\iota_{v_i}(x))-k_i} \\
\end{matrix}
\end{array}} \right) , i=1,2,3$$
and $k_i=\deg(\iota_{v_i}(x))-\deg(\gcd(\iota_{v_i}(x),\ell_{v_i}(x))) \in \mathbb{N}$, $i=1,2,3$.
\end{proposition}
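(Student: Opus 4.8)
The plan is to push everything through the idempotent decomposition of $R$ so that the problem splits into three identical field problems, and then to reassemble. First I would invoke the ring isomorphism $R[x]/\langle x^m-1\rangle=\bigoplus_{i=1}^{3}(\mathbb{F}_q[x]/\langle x^m-1\rangle)v_i$ from Section 2 (and its analogue for $n$), together with the fact that each $P_{v_i}$ is an $\mathbb{F}_q[x]$-homomorphism, to write $C=v_1C^{(1)}\oplus v_2C^{(2)}\oplus v_3C^{(3)}$, where $C^{(i)}=P_{v_i}(C)$ is the $\mathbb{F}_q$-double cyclic code $\langle(\iota_{v_i}(x)\mid 0),(\ell_{v_i}(x)\mid\theta_{v_i}(x))\rangle$ of length $(m,n)$. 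Because $v_1,v_2,v_3$ are orthogonal idempotents with $v_1+v_2+v_3=1$, every codeword of $C$ is uniquely $\sum_i v_ic^{(i)}$ with $c^{(i)}\in C^{(i)}$; hence the rows of a generator matrix $G_i$ of $C^{(i)}$, each multiplied by $v_i$, together span $C$ over $\mathbb{F}_q$, which is exactly the asserted matrix $\left(\begin{smallmatrix}G_1v_1\\G_2v_2\\G_3v_3\end{smallmatrix}\right)$. Viewing $R^{m+n}$ as $\mathbb{F}_q^{3(m+n)}$ through the basis $\{v_1,v_2,v_3\}$ and permuting coordinates so that the $v_1$-, $v_2$-, $v_3$-parts are grouped, this code becomes the external direct sum $C^{(1)}\oplus C^{(2)}\oplus C^{(3)}$ on disjoint coordinate blocks; this, composed with the within-block permutations used below, is precisely the claimed permutation equivalence to an $\mathbb{F}_q$-linear code. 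The whole problem therefore reduces to producing, for one field code $C^{(i)}$, a generator matrix of the stated shape.

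For the field code $C^{(i)}$ I would construct an explicit spanning set from the two generators and their cyclic shifts, in the style of Borges et al. and Gao et al. After normalizing by Lemma 3 so that $\deg\ell_{v_i}<\deg\iota_{v_i}$, the shifts $x^j(\iota_{v_i}(x)\mid 0)$, $0\le j<m-\deg\iota_{v_i}$, span the kernel $\ker(P_n|_{C^{(i)}})=\langle\iota_{v_i}\rangle\times\{0\}$ identified in the proof of Proposition 2; writing the cyclic code $\langle\iota_{v_i}\rangle$ in systematic form after a column permutation produces the top row group $(I_{m-\deg\iota_{v_i}}\;\;A^1_i\;\;A^2_i\mid 0\;\;0\;\;0)$. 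The shifts $x^j(\ell_{v_i}(x)\mid\theta_{v_i}(x))$ then supply the rows with nonzero second block, and their second coordinates generate $P_n(C^{(i)})=\langle\theta_{v_i}\rangle$.

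The delicate point, and the place where $k_i$ enters, is to separate these mixed rows into those that genuinely couple the two blocks and those that collapse, after subtracting a suitable element of $\langle\iota_{v_i}\rangle\times\{0\}$, to second-block-only rows. The key observation is that reducing the first block of each shifted mixed generator modulo $\langle\iota_{v_i}\rangle$ produces residues lying in the $\mathbb{F}_q[x]$-ideal generated by $\ell_{v_i}$ inside $\mathbb{F}_q[x]/\langle\iota_{v_i}\rangle$, an ideal of $\mathbb{F}_q$-dimension $\deg\iota_{v_i}-\deg\gcd(\iota_{v_i},\ell_{v_i})=k_i$. Thus exactly $k_i$ of the mixed rows keep an independent nonzero first-block residue, giving the middle row group with its block $I_{k_i}$ in the second coordinates, while the remaining mixed rows clear their first block entirely and become the bottom, second-block-only row group. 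Here I would use Lemma 4 and Lemma 5, namely $\iota_{v_i}\mid\frac{x^n-1}{\theta_{v_i}}\ell_{v_i}$ and $\iota_{v_i}\mid\frac{x^n-1}{\theta_{v_i}}\gcd(\iota_{v_i},\ell_{v_i})$, to certify that this clearing is consistent with the already fixed second-block structure; a final column permutation puts both the middle and the bottom groups into the systematic form displayed, producing the two remaining identity blocks.

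I expect the main obstacle to be this last step: checking that the Gaussian elimination together with the column permutations yields exactly the block pattern above, with the stated identity sizes, rather than a coarser echelon form. Concretely, one must keep track of how reducing the first blocks modulo $\langle\iota_{v_i}\rangle$ interacts with the reductions modulo $x^n-1$ in the second block, and use the divisibility relations of Lemmas 4 and 5 to verify that the count $k_i$ is correct and that the lower rows can indeed be cleared in the first coordinate without disturbing the identity blocks already created. Once the single-block claim is pinned down with all identity sizes accounted for, the reassembly in the first paragraph is routine, and the divisibility conditions $\iota_{v_i}\mid x^m-1$ and $\theta_{v_i}\mid x^n-1$ are inherited directly from Proposition 2.
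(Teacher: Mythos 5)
Your first paragraph is exactly the paper's proof: the paper projects $C$ through the orthogonal idempotents $v_1,v_2,v_3$ to get three double cyclic codes $C_{v_i}$ over $\mathbb{F}_q$, invokes Proposition~8 of the cited reference \cite{Diao} for the block generator matrix $G_i$ of each field component, and reassembles to $G=\bigl(\begin{smallmatrix}G_1v_1\\G_2v_2\\G_3v_3\end{smallmatrix}\bigr)$ --- nothing more. The difference is that where the paper discharges the entire field-level claim by citation, you attempt to prove it in-house: the shifts $x^j(\iota_{v_i}\mid 0)$ giving the top row group, the shifts $x^j(\ell_{v_i}\mid\theta_{v_i})$ giving the mixed rows, and the count $k_i=\dim\bigl(\langle\ell_{v_i}\rangle$ in $\mathbb{F}_q[x]/\langle\iota_{v_i}\rangle\bigr)=\deg\iota_{v_i}-\deg\gcd(\iota_{v_i},\ell_{v_i})$ determining how many mixed rows retain an independent first block. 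That dimension count is correct and is indeed the heart of the matter, and your appeal to the relations $\iota_{v_i}\mid\frac{x^n-1}{\theta_{v_i}}\ell_{v_i}$ (Lemma~4) is the right tool for showing the listed shifts already span. You honestly flag that the final verification of the block pattern and identity sizes is left open; note that this unverified step is precisely the content the paper outsources to \cite{Diao}, so your proposal is not weaker than the paper's proof --- it is the same reduction with a partial expansion of the black box. One small point of care if you do carry out the expansion: make sure the row counts of the three groups sum to $m+n-\deg\iota_{v_i}-\deg\theta_{v_i}$ as Corollary~1 requires, since the bottom identity block as printed in the statement ($I_{n-\deg\iota_{v_i}-k_i}$) only matches that total when $\deg\iota_{v_i}=\deg\theta_{v_i}$.
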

\begin{proof}
It follows that all of $C_{v_i}, i=1,2,3$ are double cyclic codes over $\mathbb{F}_q$ from the canonical projections. By the Proposition 8 of paper \cite{Diao}, we know that each of $C_{v_i}$ is permutation equivalent to a linear code over $\mathbb{F}_q$, and their generating Matrices have the following forms:
$$
G_i=\left( {\begin{array}{c|c}
\begin{matrix}
I_{m-\deg(\iota_{v_i}(x))} & A^1_i & A^2_i \\
0                       & B^1_i & B^2_i \\
0                       & 0          & 0           \\
	
\end{matrix} &
\begin{matrix}
0               & 0        & 0                            \\
B^3_i       & I_{k_i}  & 0                            \\
M^1_i       & M^2_i &I_{n-\deg(\iota_{v_i}(x))-k_i} \\
\end{matrix}
\end{array}} \right) , i=1,2,3,$$
where $B^1_i$ are full rank square matrices of size $k_i \times k_i$, $i=1,2,3$. By reducing the canonical projections to the double cyclic codes, we obtain that the matrix form of $C$ is $G$.
\end{proof}

From the generator matrix of the $R$-double cyclic codes, we have an easy computation as follows
\begin{corollary}
Let $C=\left\langle (\iota(x)|0),(\ell(x)|\theta(x)) \right\rangle=\left\langle (\sum_{i=1}^3\iota_{v_i}(x)v_i|0),(\sum_{i=1}^3\ell_{v_i}(x)v_i| \sum_{i=1}^3 \theta_{v_i}(x)v_i)\right\rangle$ be a double cyclic code of length $(m,n)$ over $R$. Then $C$ is an $\mathbb{F}_{q}$-linear code of dimension $3m+3n-\sum_{i=1}^3(\deg(\iota_{v_i}(x))+\deg(\theta_{v_i}(x)))$. 
\end{corollary}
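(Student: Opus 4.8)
The plan is to read the dimension straight off the generator matrix of Proposition 3, since the statement is advertised as an immediate consequence of that matrix form. Permutation equivalence preserves the number of $\mathbb{F}_q$-independent codewords, so I would first record that $\dim_{\mathbb{F}_q}C$ equals the $\mathbb{F}_q$-rank of the block matrix $G=\begin{pmatrix} G_1v_1\\ G_2v_2\\ G_3v_3\end{pmatrix}$, and then compute that rank.

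The first genuine step is to split the rank across the three blocks. Because $v_1,v_2,v_3$ are pairwise orthogonal idempotents with $R=\mathbb{F}_qv_1\oplus\mathbb{F}_qv_2\oplus\mathbb{F}_qv_3$, the rows of $G_iv_i$ live entirely in the $v_i$-component. Applying the canonical projections $P_{v_j}$ to a vanishing $\mathbb{F}_q$-linear combination of all the rows of $G$ forces an independent relation inside each block, so the $\mathbb{F}_q$-row space of $G$ is the internal direct sum of those of the $G_iv_i$. Hence $\mathrm{rank}_{\mathbb{F}_q}(G)=\sum_{i=1}^{3}\mathrm{rank}_{\mathbb{F}_q}(G_i)$, and it remains to evaluate each $\mathrm{rank}(G_i)$.

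For a single $G_i$ I would argue that its rows are linearly independent by a staircase/pivot count: the three identity submatrices occupy pairwise disjoint column sets, so reading off the columns carrying $I_{n-\deg(\theta_{v_i})-k_i}$, then those carrying $I_{k_i}$, then those carrying $I_{m-\deg(\iota_{v_i})}$ forces every coefficient in a vanishing combination to be zero in turn. Therefore $\mathrm{rank}(G_i)$ equals the number of rows of $G_i$, namely $(m-\deg(\iota_{v_i}))+k_i+(n-\deg(\theta_{v_i})-k_i)=m+n-\deg(\iota_{v_i})-\deg(\theta_{v_i})$, the $k_i$ cancelling. Summing over $i=1,2,3$ yields $3m+3n-\sum_{i=1}^{3}(\deg(\iota_{v_i})+\deg(\theta_{v_i}))$, as claimed.

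The pivot bookkeeping and the idempotent splitting are routine, so the only point demanding care — and the one place I would slow down — is confirming that the bottom identity block of $G_i$ has size $n-\deg(\theta_{v_i})-k_i$ rather than the size written with $\deg(\iota_{v_i})$ in the display; otherwise the $k_i$ and the $\deg(\theta_{v_i})$ contribution would not combine to give the stated dimension. I would pin this down with an independent cross-check via the decomposition $C=v_1C_1\oplus v_2C_2\oplus v_3C_3$, where each $C_i$ is an $\mathbb{F}_q$-double cyclic code: the First Isomorphism argument used in Proposition 2 gives $\dim_{\mathbb{F}_q}C_i=\dim\ker(P_n|_{C_i})+\dim P_n(C_i)=(m-\deg(\iota_{v_i}))+(n-\deg(\theta_{v_i}))$, which matches the rank count and fixes the correct block size.
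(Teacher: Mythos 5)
Your argument is correct and matches the paper's intended (but unwritten) proof: the corollary is presented as an immediate consequence of the generator matrix in Proposition 3, and your idempotent-splitting plus pivot count of the rows of each $G_i$ is exactly that computation. You are also right to flag the bottom identity block — it must have size $n-\deg(\theta_{v_i}(x))-k_i$ for the row count to give $m+n-\deg(\iota_{v_i}(x))-\deg(\theta_{v_i}(x))$, so the $\iota_{v_i}$ appearing there in the paper's display is a typo, and your kernel--image cross-check via $\dim C_{v_i}=(m-\deg(\iota_{v_i}(x)))+(n-\deg(\theta_{v_i}(x)))$ settles the point.
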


\begin{proposition}
Let $\thickmuskip=0mu \medmuskip=0mu C=\left\langle (\iota(x)|0),(\ell(x)|\theta(x)) \right\rangle =\left\langle (\sum_{i=1}^3\iota_{v_i}(x)v_i|0),(\sum_{i=1}^3\ell_{v_i}(x)v_i| \sum_{i=1}^3 \theta_{v_i}(x)v_i)\right\rangle$ be a double cyclic code of length $(m,n)$ over $R$ and $$C^{\bot}=\left\langle (\overline{\iota}(x)|0),(\overline{\ell}(x)|\overline{\theta}(x)) \right\rangle =\left\langle \left( \sum_{i=1}^3\overline{\iota}_{v_i}(x)v_i|0\right),\left( \sum_{i=1}^3\overline{\ell}_{v_i}(x)v_i| \sum_{i=1}^3 \overline{\theta}_{v_i}(x)v_i\right) \right\rangle.$$
Then 
$$\begin{cases}
|C_m|=q^{3m+\sum_{i=1}^{3}k_i-\sum_{i=1}^{3}\deg(\iota_{v_i}(x))},&|C_n|=q^{3n-\sum_{i=1}^{3}\deg(\theta_{v_i}(x))} ,\\
|(C^{\bot})_m|=q^{\sum_{i=1}^{3}\deg(\theta_{v_i}(x))},           &|(C^{\bot})_n|=q^{\sum_{i=1}^{3}\deg(\theta_{v_i}(x))+\sum_{i=1}^{3}k_i},\\
\end{cases}$$
where $k_i=\deg(\iota_{v_i}(x))-\deg(\gcd(\iota_{v_i}(x),\ell_{v_i}(x))) \in \mathbb{N}$, $i=1,2,3$.
\end{proposition}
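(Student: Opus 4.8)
The plan is to pass to the three $\mathbb{F}_q$-components through the idempotent decomposition and reduce every cardinality to a product of cardinalities of ordinary cyclic (and double cyclic) codes over $\mathbb{F}_q$. Writing $C=\bigoplus_{i=1}^3 v_i C_{v_i}$ with each $C_{v_i}=P_{v_i}(C)$ a double cyclic code of length $(m,n)$ over $\mathbb{F}_q$, I would first record that $C^{\bot}=\bigoplus_{i=1}^3 v_i (C_{v_i})^{\bot}$. This follows because the $v_i$ are orthogonal idempotents, so $\langle c,d\rangle=\sum_{i=1}^3 v_i\langle P_{v_i}(c),P_{v_i}(d)\rangle$ and $v_1,v_2,v_3$ are $\mathbb{F}_q$-independent, exactly as in Facts 3. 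Since the coordinate projections $P_m,P_n$ commute with the $P_{v_i}$, each of $C_m$, $C_n$, $(C^{\bot})_m$, $(C^{\bot})_n$ splits as $\bigoplus_i v_i(\cdot)$ and its cardinality is the product over $i=1,2,3$ of the corresponding $\mathbb{F}_q$-cardinality. It therefore suffices to compute four dimensions over $\mathbb{F}_q$ for a single component and sum the exponents.

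For $C_n$ and $C_m$ I would argue directly from Proposition 2. The second projection gives $P_n(C_{v_i})=\langle\theta_{v_i}(x)\rangle$ with $\theta_{v_i}(x)\mid x^n-1$, so $\dim_{\mathbb{F}_q}P_n(C_{v_i})=n-\deg(\theta_{v_i}(x))$ and hence $|C_n|=q^{3n-\sum_{i=1}^3\deg(\theta_{v_i}(x))}$. The first projection gives $P_m(C_{v_i})=\langle\iota_{v_i}(x),\ell_{v_i}(x)\rangle=\langle\gcd(\iota_{v_i}(x),\ell_{v_i}(x))\rangle$ in $\mathbb{F}_q[x]/\langle x^m-1\rangle$ (the gcd divides $\iota_{v_i}(x)\mid x^m-1$, so it is a legitimate cyclic generator), whence $\dim_{\mathbb{F}_q}P_m(C_{v_i})=m-\deg(\gcd(\iota_{v_i}(x),\ell_{v_i}(x)))=m-\deg(\iota_{v_i}(x))+k_i$ by the definition of $k_i$; summing gives $|C_m|=q^{3m+\sum_{i=1}^3 k_i-\sum_{i=1}^3\deg(\iota_{v_i}(x))}$.

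For the duals I would use the two short exact sequences attached to the projections of $C_{v_i}$ together with $|C_{v_i}|\,|(C_{v_i})^{\bot}|=q^{m+n}$. From $0\to \mathrm{Ker}(P_n|_{C_{v_i}})\to C_{v_i}\xrightarrow{P_n}P_n(C_{v_i})\to 0$ and Proposition 2, $\mathrm{Ker}(P_n|_{C_{v_i}})=\langle(\iota_{v_i}(x)|0)\rangle$ has dimension $m-\deg(\iota_{v_i}(x))$; combined with Corollary 1 (so $\dim C_{v_i}=m+n-\deg(\iota_{v_i}(x))-\deg(\theta_{v_i}(x))$) one gets $\dim\mathrm{Ker}(P_m|_{C_{v_i}})=\dim C_{v_i}-\dim P_m(C_{v_i})=n-\deg(\theta_{v_i}(x))-k_i$. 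The key identity is that for any linear code the image of $C^{\bot}$ under a coordinate projection is the dual, inside that coordinate block, of the corresponding kernel of $C$: concretely $P_n(C^{\bot})=(\mathrm{Ker}(P_m|_{C}))^{\bot}$ (one inclusion is immediate from the inner product, the reverse by matching dimensions via $\dim C+\dim C^{\bot}=m+n$). Applying this yields $\dim P_n((C_{v_i})^{\bot})=n-\dim\mathrm{Ker}(P_m|_{C_{v_i}})=\deg(\theta_{v_i}(x))+k_i$, i.e. $|(C^{\bot})_n|=q^{\sum_{i=1}^3\deg(\theta_{v_i}(x))+\sum_{i=1}^3 k_i}$, and symmetrically $\dim P_m((C_{v_i})^{\bot})=m-\dim\mathrm{Ker}(P_n|_{C_{v_i}})=\deg(\iota_{v_i}(x))$.

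The main obstacle is purely bookkeeping: setting up the kernel--dual identity $P_m(C^{\bot})=(\mathrm{Ker}(P_n|_C))^{\bot}$ cleanly and threading the dimension of $\mathrm{Ker}(P_m|_{C_{v_i}})$ (which needs $\dim P_m(C_{v_i})$ from the previous step) through the exact sequences; everything else is a degree count. I would, however, flag one point for rechecking: the computation above yields $|(C^{\bot})_m|=q^{\sum_{i=1}^3\deg(\iota_{v_i}(x))}$ rather than the stated $q^{\sum_{i=1}^3\deg(\theta_{v_i}(x))}$. The value $q^{\sum_{i=1}^3\deg(\theta_{v_i}(x))}$ is in fact the size of $\mathrm{Ker}(P_m|_{C^{\bot}})=(C_n)^{\bot}$, the set of codewords of $C^{\bot}$ supported on the second block, not of the projection $P_m(C^{\bot})$; a single $\mathbb{F}_q$-component example with $m=2$, $n=1$, $\iota(x)=x-1$, $\theta(x)=1$, $\ell(x)=0$ gives $(C^{\bot})_m=\mathrm{span}\{(1,1)\}$ and confirms $|(C^{\bot})_m|=q^{\deg(\iota)}$ while $\deg(\theta)=0$. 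I would verify this exponent before finalizing the statement.
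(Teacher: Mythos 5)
Your proof is correct, and for the two primal cardinalities it coincides with the paper's: both reduce to the three $\mathbb{F}_q$-components via the idempotents and use that $P_m(C_{v_i})$ is the cyclic code generated by $\gcd(\iota_{v_i}(x),\ell_{v_i}(x))$ while $P_n(C_{v_i})$ is generated by $\theta_{v_i}(x)$. For the dual cardinalities the routes diverge: the paper writes down an explicit parity-check matrix $H_i$ derived from the block generator matrix of Proposition 3 and reads the two sizes off the ranks of its left and right column blocks, whereas you use the duality $P_n(C^{\bot})=\left(\mathrm{Ker}(P_m|_{C})\right)^{\bot}$ (and its mirror) together with the rank--nullity bookkeeping $\dim C_{v_i}=\dim P_m(C_{v_i})+\dim\mathrm{Ker}(P_m|_{C_{v_i}})$ and $\dim C+\dim C^{\bot}=m+n$. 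Your route avoids the matrix computation entirely and is, if anything, more robust, since it does not depend on the block shapes imported from \cite{Diao} (which, as printed, do not even have consistent row counts).

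On the discrepancy you flagged: you are right, and your counterexample is decisive. If $(C^{\bot})_m$ means $P_m(C^{\bot})$ --- the only reading consistent with Definition 2 and with the proof's own use of $C_m=P_m(C)$ --- then $|(C^{\bot})_m|=q^{\sum_{i=1}^3\deg(\iota_{v_i}(x))}$, not $q^{\sum_{i=1}^3\deg(\theta_{v_i}(x))}$. The paper's own $H_i$ confirms this: its left $m$-column block contains the identity blocks $I_{k_i}$ and $I_{\deg(\iota_{v_i}(x))-k_i}$ in disjoint columns above a zero block, hence has rank $\deg(\iota_{v_i}(x))$. The stated value $q^{\sum_{i=1}^3\deg(\theta_{v_i}(x))}$ is instead $|\mathrm{Ker}(P_m|_{C^{\bot}})|=|(C_n)^{\bot}|$, a different object; and the quantity actually invoked from this proposition in the proof of Corollary 2, namely $q^{\deg(\iota_{v_i}(x))-k_i}$ per component, is yet a third object, $|((C_m)^{\bot})_{v_i}|$, the dual of the projection rather than the projection of the dual. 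So keep your exponent $\sum_{i=1}^3\deg(\iota_{v_i}(x))$; the error is in the statement, not in your argument.
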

\begin{proof}
According to Proposition 2, we know that $C_m$ is generated by the polynomial $\gcd(\iota(x),\ell(x))$ and $C_n$ is generated by $\theta(x)$.
By the numbers of codeword to single cyclic codes over $R$, we obtain that   
$$\begin{cases}
|C_m|=\sum_{i=1}^{3}|(C_m)_{v_i}|=\sum_{i=1}^{3}q^{m-\deg(\iota_{v_i}(x),\ell_{v_i}(x))}=q^{3m+\sum_{i=1}^{3}k_i-\sum_{i=1}^{3}\deg(\iota_{v_i}(x))}; \\
|C_n|=\sum_{i=1}^{3}|(C_n)_{v_i}|=\sum_{i=1}^{3}q^{n-\deg(\theta_{v_i}(x))}=q^{3n-\sum_{i=1}^{3}\deg(\theta_{v_i}(x))}.
\end{cases}$$
	
From the generating matrix forms of $C$, we can efficiently work out the parity check matrix of $C$ is $H=\begin{pmatrix} H_1v_1\\H_2v_2\\H_3v_3\\ \end{pmatrix}$, where
$$H_i=\left( {\begin{array}{c|c}
\begin{matrix}
(A^1_i)^t  & I_{k_i} & 0                      \\
(A^2_i)^t & 0       & I_{\deg(\iota_{v_i}(x))-k_i}  \\                      
0             & 0       & 0                       \\
\end{matrix} &
\begin{matrix}
0                         & (B^1_i)^t  & (B^1_i)^t (M^2_i)^t           \\
0                         & (B^2_i)^t  & (B^2_i)^t (M^2_i)^t           \\
I_{\deg(\theta_{v_i}(x))} & (B^3_i)^t  & (M^1_i)^t+(B^3_i)^t (M^2_i)^t \\
\end{matrix}
\end{array}} \right), i=1,2,3.$$
Taking advantage of the relationship between cyclic codes and their dual codes about the matrix forms, we can use the same method as above to obtain that 
$$\begin{cases}
|(C^{\bot})_m|=\sum_{i=1}^{3}|((C^{\bot})_m)_{v_i}|=q^{\sum_{i=1}^{3}\deg(\theta_{v_i}(x))},        \\
|(C^{\bot})_n|=\sum_{i=1}^{3}|((C^{\bot})_n)_{v_i}|=q^{\sum_{i=1}^{3}\deg(\theta_{v_i}(x))+\sum_{i=1}^{3}k_i}. \\
\end{cases}$$
\end{proof}
\begin{corollary}
Let $C$ and $C^{\bot}$ be the above station. Then
$$
\begin{cases}
\deg(\overline{\iota}_{v_i}(x))=m-\deg(\gcd(\iota_{v_i}(x),\ell_{v_i}(x))),                                            \\
\deg(\overline{\theta}_{v_i}(x))=n-\deg(\iota_{v_i}(x))-\deg(\theta_{v_i}(x))+\deg(\gcd(\iota_{v_i}(x),\ell_{v_i}(x))), \\
\end{cases} i=1,2,3.
$$
\end{corollary}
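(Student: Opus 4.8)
The plan is to reduce the statement to the componentwise picture inherited from Section~2 and then carry out pure size bookkeeping. First I would record the decomposition underlying all of Section~3: writing $C=\bigoplus_{i=1}^{3}v_iC_{v_i}$, each $C_{v_i}=\langle(\iota_{v_i}(x)|0),(\ell_{v_i}(x)|\theta_{v_i}(x))\rangle$ is an $\mathbb{F}_q$-double cyclic code of length $(m,n)$; and since the $v_i$ are orthogonal idempotents the $R$-inner product splits as $\langle c,d\rangle=\sum_{i=1}^{3}v_i\langle c_{v_i},d_{v_i}\rangle_{\mathbb{F}_q}$, whence $C^{\bot}=\bigoplus_{i=1}^{3}v_i(C_{v_i})^{\bot}$ with $(C_{v_i})^{\bot}=\langle(\overline{\iota}_{v_i}(x)|0),(\overline{\ell}_{v_i}(x)|\overline{\theta}_{v_i}(x))\rangle$. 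This is the double-cyclic analogue of Facts~1 and~3, and it lets me compute the two target degrees one index $i$ at a time.

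Second, for each fixed $i$ I would extract a degree-sum identity by counting. Applying Proposition~2 to $C_{v_i}$ gives $\mathrm{Ker}(P_n|_{C_{v_i}})=\langle\iota_{v_i}(x)\rangle$ and $P_n(C_{v_i})=\langle\theta_{v_i}(x)\rangle$, so the First Isomorphism count produces $|C_{v_i}|=q^{(m-\deg\iota_{v_i}(x))+(n-\deg\theta_{v_i}(x))}$, which is exactly Corollary~1 read component by component. Since $(C_{v_i})^{\bot}$ is the ordinary $\mathbb{F}_q$-dual of an $(m+n)$-length code, $|C_{v_i}|\cdot|(C_{v_i})^{\bot}|=q^{m+n}$; applying the same count to $(C_{v_i})^{\bot}$ and equating exponents yields
$$\deg\overline{\iota}_{v_i}(x)+\deg\overline{\theta}_{v_i}(x)=m+n-\deg\iota_{v_i}(x)-\deg\theta_{v_i}(x).$$

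Third, I would pin down $\deg\overline{\theta}_{v_i}(x)$ on its own via Proposition~4. Since $P_n(C^{\bot})=\langle\overline{\theta}(x)\rangle$, componentwise $|((C^{\bot})_n)_{v_i}|=q^{n-\deg\overline{\theta}_{v_i}(x)}$, whereas the proof of Proposition~4 records $|((C^{\bot})_n)_{v_i}|=q^{\deg\theta_{v_i}(x)+k_i}$ with $k_i=\deg\iota_{v_i}(x)-\deg\gcd(\iota_{v_i}(x),\ell_{v_i}(x))$. Equating exponents gives $\deg\overline{\theta}_{v_i}(x)=n-\deg\theta_{v_i}(x)-k_i$, which is the second claimed formula after substituting $k_i$. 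Subtracting this from the degree-sum identity of the previous step collapses to $\deg\overline{\iota}_{v_i}(x)=m-\deg\gcd(\iota_{v_i}(x),\ell_{v_i}(x))$, the first claimed formula. (I note that the alternative route through $|(C^{\bot})_m|$ only yields $\deg\gcd(\overline{\iota}_{v_i}(x),\overline{\ell}_{v_i}(x))$, not $\deg\overline{\iota}_{v_i}(x)$ itself, so the dimension identity is genuinely needed to isolate $\overline{\iota}$.)

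The part needing the most care is not the arithmetic but the justification that the size formulas may be read off component by component: I must check that the aggregate identities of Proposition~4 (written there with the convention that $|\cdot|$ of a direct sum is the product of its $v_i$-parts) truly hold at the level of each individual factor $q^{(\cdot)_{v_i}}$, and that the coordinate projections $P_m,P_n$ commute with the canonical projections $P_{v_i}$, so that $((C^{\bot})_n)_{v_i}=((C_{v_i})^{\bot})_n=\langle\overline{\theta}_{v_i}(x)\rangle$. Both facts follow from the orthogonality $v_iv_j=\delta_{ij}v_i$ together with the isomorphism $R[x]/\langle x^n-1\rangle\cong\bigoplus_{i=1}^{3}(\mathbb{F}_q[x]/\langle x^n-1\rangle)v_i$ established in Section~2; once this compatibility is stated cleanly, the two degree formulas drop out by exponent arithmetic.
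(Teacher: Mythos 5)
Your proposal is correct, and for the second formula it is the paper's own argument: both of you equate $|((C^{\bot})_n)_{v_i}|=q^{n-\deg(\overline{\theta}_{v_i}(x))}$ with the value $q^{\deg(\theta_{v_i}(x))+k_i}$ recorded in Proposition 4. For the first formula the routes genuinely diverge. The paper works with $(C_m)^{\bot}$, the dual of the projected code $C_m=P_m(C)$: since $(d^1(x)|0)\in C^{\bot}$ exactly when $d^1(x)\perp P_m(C)$, this dual coincides with the kernel ideal of $C^{\bot}$ from Proposition 2, i.e.\ $(C_m)^{\bot}=\langle\overline{\iota}(x)\rangle$, and comparing $q^{m-\deg(\overline{\iota}_{v_i}(x))}=|((C_m)^{\bot})_{v_i}|=q^{m}/|(C_m)_{v_i}|=q^{\deg(\iota_{v_i}(x))-k_i}$ gives formula (1) directly and independently of $\overline{\theta}$. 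You instead invoke the global duality $|C_{v_i}|\cdot|(C_{v_i})^{\bot}|=q^{m+n}$ (Corollary 1 read componentwise for $C$ and for $C^{\bot}$) and subtract formula (2). Both are sound. Your remark that $P_m(C^{\bot})$ only controls $\gcd(\overline{\iota}_{v_i}(x),\overline{\ell}_{v_i}(x))$ is accurate, but note the paper does not use $P_m(C^{\bot})$; it uses $(P_m(C))^{\bot}$, which is generated by $\overline{\iota}(x)$ itself. What each approach buys: the paper's version yields the two degrees independently but silently relies on the identification $(P_m(C))^{\bot}=\{d^1(x):(d^1(x)|0)\in C^{\bot}\}$, which your detour through the total dimension avoids having to justify; conversely, your version makes formula (1) logically dependent on formula (2) and on the splitting $C^{\bot}=\oplus_{i=1}^{3}v_i(C_{v_i})^{\bot}$, which you correctly flag as the step requiring explicit verification (it does follow from $v_iv_j=\delta_{ij}v_i$ as you indicate, in parallel with Facts 1 and 3).
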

\begin{proof}
Since $(C_m)^{\bot}$ is a cyclic code generated by $\overline{\iota}(x)$, from the conclusions about cyclic codes over $R$, we have $|((C_m)^{\bot})_{v_i}|=q^{m-\deg(\iota_{v_i}(x))}, i=1,2,3$. Furthermore, by Proposition 4, we have $|((C_m)^{\bot})_{v_i}|=q^{\deg(\iota_{v_i}(x))-k_i}, i=1,2,3$. Then $\deg(\overline{\iota}_{v_i}(x))=m-\deg(\gcd(\iota_{v_i}(x),\ell_{v_i}(x))), i=1,2,3$.

While $C^{\bot}$ is also an $R$-double cyclic code of length $(m,n)$ and $(C^{\bot})_m$ is a cyclic code generated by $\overline{\theta}(x)$. Thus $|((C^{\bot})_m)_{v_i}|=q^{n-\deg(\overline{\theta}_{v_i}(x))}$. By Proposition 4, we have $|((C^{\bot})_m)_{v_i}|=q^{\deg(\theta_{v_i}(x))+k_i}, i=1,2,3$. Consequently, $$\deg(\overline{\theta}_{v_i}(x))=n-\deg(\iota_{v_i}(x))-\deg(\theta_{v_i}(x))+\deg(\gcd(\iota_{v_i}(x),\ell_{v_i}(x))), i=1,2,3.$$
\end{proof}
\begin{example}
Let $\begin{cases}
\iota(x)=\sum_{i=0}^{4}x^i,              \\
\ell(x)=(4+2v+6v^2)x^3+(3+6v)x^2+(2+4v+4v^2)x+(5+v+2v^2),\\
\theta(x)=x+6,                            \\                            
\end{cases}$ where $\mathbb{F}_q=\mathbb{F}_7$, $m=n=5$.

Since $\ell(x)=(4x^3+3x^2+2x+5)v_1+(5x^3+2x^2+3x+1)v_2+(x^3+4x^2+2x+6)v_3$. This means that $\begin{cases}
P_{v_1}(C)=\left\langle (x^4+x^3+x^2+x+1|0), (4x^3+3x^2+2x+5|x+6)\right\rangle , \\
P_{v_2}(C)=\left\langle (x^4+x^3+x^2+x+1|0), (5x^3+2x^2+3x+1|x+6)\right\rangle , \\
P_{v_3}(C)=\left\langle (x^4+x^3+x^2+x+1|0), (1x^3+4x^2+2x+6|x+6)\right\rangle . \\
\end{cases}$ 
Therefore $P_{v_i}(C), i=1,2,3$ have the generating matrix forms of 
$$
\setlength{\arraycolsep}{2.0pt}
G_1=\left( {\begin{array}{c|c}
\begin{matrix}
1&1&1&1&1 \\
5&2&3&4&0 \\
0&5&2&3&4 \\
4&0&5&2&3 \\
3&4&0&5&2 \\
\end{matrix} &
\begin{matrix}
0&0&0&0&0 \\
6&1&0&0&0 \\
0&6&1&0&0 \\
0&0&6&1&0 \\
0&0&0&6&1 \\
\end{matrix}
\end{array}} \right),
G_2=\left( {\begin{array}{c|c}
\begin{matrix}
1&1&1&1&1 \\
1&3&2&5&0 \\
0&1&3&2&5 \\
5&0&1&3&2 \\
2&5&0&1&3 \\
\end{matrix} &
\begin{matrix}
0&0&0&0&0 \\
6&1&0&0&0 \\
0&6&1&0&0 \\
0&0&6&1&0 \\
0&0&0&6&1 \\
\end{matrix}
\end{array}} \right),
G_3=\left( {\begin{array}{c|c}
\begin{matrix}
1&1&1&1&1 \\
6&2&4&1&0 \\
0&6&2&4&1 \\
1&0&6&2&4 \\
4&1&0&6&2 \\
\end{matrix} &
\begin{matrix}
0&0&0&0&0 \\
6&1&0&0&0 \\
0&6&1&0&0 \\
0&0&6&1&0 \\
0&0&0&6&1 \\
\end{matrix}
\end{array}} \right).
$$ 
It is easy to see that both of $P_{v_i}(C),i=1,2,3$ are the optimal linear code with parameter $[10,5,5]$ over $\mathbb{F}_7$.
\end{example}
\section{Dual codes over $R$}
\begin{definition}
For $r(x) \in R[x]$, let $r(x)=\sum_{i=1}^{3}r_{v_i}(x)v_i$ with $r_{v_i}(x) \in \mathbb{F}_{q}[x], i =1,2,3$. Define the monic reciprocal polynomial of $r(x)$ is $$r^{*}(x)=\sum_{i=1}^{3}r^{*}_{v_i}(x)v_i=\sum_{i=1}^{3}(lc(r_{v_i}(x)))^{-1}x^{\deg(r_{v_i}(x))}r_{v_i}(x^{-1})v_i,$$ where $lc(r_{v_i}(x))$ expressed by the lowest term coefficient of $r_{v_i}(x)$, $i=1,2,3$.
\end{definition}
\begin{remark}
The definition of the monic reciprocal polynomial in $R[x]$ can be regarded as a generalization to the case of conventional finite fields. For this definition, we also have $(r^{*}(x))_{v_i}=(r_{v_i}(x))^{*}, i=1,2,3$. For this reason, we can write $r^{*}_{v_i}(x)$, $i=1,2,3$ without confusion.
\end{remark}
Same as Lemma 1, we also have
\begin{lemma}
Let $a(x), b(x) \in R[x]$ with $a(x)|b(x)$. Then $(\frac{b(x)}{a(x)})^{*}=\frac{b^{*}(x)}{a^{*}(x)}$.
\end{lemma}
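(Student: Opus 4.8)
The plan is to reduce the claimed identity to the three component polynomials in $\mathbb{F}_q[x]$ and then invoke the multiplicativity of the classical monic reciprocal over a field. Two facts that make this reduction work are already in hand: by Remark 1 the $v_i$-projection of a quotient is the quotient of the projections, so that $P_{v_i}(b(x)/a(x)) = b_{v_i}(x)/a_{v_i}(x)$ whenever $a(x)|b(x)$; and by Remark 3 the reciprocal commutes with projection, i.e. $(r^{*}(x))_{v_i} = (r_{v_i}(x))^{*}$. Since $\{v_1,v_2,v_3\}$ is a basis of $R$ over $\mathbb{F}_q$, it suffices to prove the scalar identity $((b(x)/a(x))^{*})_{v_i} = (b^{*}(x)/a^{*}(x))_{v_i}$ for each $i=1,2,3$.

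First I would check that the right-hand side is well defined, namely that $a^{*}(x)|b^{*}(x)$ in $R[x]$. From $a(x)|b(x)$ and Lemma 1 we get $a_{v_i}(x)|b_{v_i}(x)$ in $\mathbb{F}_q[x]$ for each $i$, and the classical fact that the monic reciprocal of a divisor divides the monic reciprocal of the multiple gives $a^{*}_{v_i}(x)|b^{*}_{v_i}(x)$. Feeding this back through Remark 3 and Lemma 1 then yields $a^{*}(x)|b^{*}(x)$. This is the step where the divisibility hypothesis is genuinely used.

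The core is the field-level computation. Writing $c_{v_i}(x) = b_{v_i}(x)/a_{v_i}(x)$, so that $b_{v_i} = a_{v_i}c_{v_i}$, I would establish multiplicativity of the monic reciprocal over $\mathbb{F}_q$: from $x^{\deg b_{v_i}}b_{v_i}(x^{-1}) = \bigl(x^{\deg a_{v_i}}a_{v_i}(x^{-1})\bigr)\bigl(x^{\deg c_{v_i}}c_{v_i}(x^{-1})\bigr)$ together with the factorization of the constant terms $b_{v_i}(0) = a_{v_i}(0)\,c_{v_i}(0)$, the monic normalizing factors recombine and one obtains $b^{*}_{v_i}(x) = a^{*}_{v_i}(x)\,c^{*}_{v_i}(x)$, hence $c^{*}_{v_i}(x) = b^{*}_{v_i}(x)/a^{*}_{v_i}(x)$. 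Combining this with the two projection identities, namely $((b/a)^{*})_{v_i} = (c_{v_i})^{*} = c^{*}_{v_i}$ by Remark 3 and Remark 1, and $(b^{*}/a^{*})_{v_i} = b^{*}_{v_i}/a^{*}_{v_i}$ by Remark 1, gives the desired scalar identity for each $i$; summing against $v_i$ then completes the proof.

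The only delicate point, and the main thing to get right, is the constant-term bookkeeping in the multiplicativity step: the monic normalization uses the inverse of the lowest-degree coefficient, so one must verify that these coefficients are nonzero (which holds here because the relevant generator polynomials divide $x^m-1$ or $x^n-1$ and are therefore coprime to $x$) and that the three normalizing factors combine correctly, so that $b_{v_i}(0)^{-1} = a_{v_i}(0)^{-1}c_{v_i}(0)^{-1}$ exactly cancels the extra constant factors produced by the two reciprocals. Everything else is routine degree arithmetic.
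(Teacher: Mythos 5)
Your proposal is correct and follows essentially the same route as the paper: reduce the quotient identity $(b/a)^{*}=b^{*}/a^{*}$ to multiplicativity of the reciprocal, decompose into the $v_i$-components, and invoke the classical fact $(fg)^{*}=f^{*}(x)g^{*}(x)$ over $\mathbb{F}_q[x]$. You are in fact somewhat more careful than the paper's two-line argument, which omits the well-definedness check $a^{*}(x)\mid b^{*}(x)$ and the constant-term bookkeeping that you rightly flag.
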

\begin{proof}
It is equivalent to prove $(a(x)b(x))^*=a^*(x)b^*(x)$, for $a(x), b(x) \in R[x]$. From polynomial theory over traditional finite fields, we have $(f(x)g(x))^{*}=f^{*}(x)g^{*}(x)$, $f(x),g(x) \in \mathbb{F}_q[x]$. While for $r(x) \in R[x]$, we can decompose it into a combination of $\left\lbrace v_1, v_2, v_3 \right\rbrace $, which translates into the case of the polynomial over finite fields. Then we get the results in this way. 
\end{proof}
\begin{remark}
Same as the case of finite fields, we still have $r^{**}(x)=r(x), \forall r(x) \in R[x]$. In the next decomposition of polynomials, we will use this Lemma, Lemma 1 and Lemma 2 repeatedly without explanation.
\end{remark}

\begin{proposition}
Utilizing the results about Lemma 6, let 
$$C=\left\langle (\iota(x)|0),(0|\theta(x)) \right\rangle=\left\langle \left( \sum_{i=1}^3\iota_{v_i}(x)v_i|0\right) ,\left( \sum_{i=1}^3\ell_{v_i}(x)v_i| \sum_{i=1}^3 \theta_{v_i}(x)v_i\right) \right\rangle$$ 
be a separable $R$-double cyclic code of length $(m,n)$. Then $C^{\bot}$ is also a separable $R$-double cyclic code and
$$C^{\bot}=\left\langle  \left( \frac{x^m-1}{\iota^{*}(x)}|0\right),\left( 0|\frac{x^n-1}{\theta^{*}(x)}\right) \right\rangle =\left\langle \left( \sum_{i=1}^{3}\frac{x^m-1}{\iota^{*}_{v_i}(x)}v_i|0\right),\left( 0|\sum_{i=1}^{3}\frac{x^n-1}{\theta^{*}_{v_i}(x)}v_i\right) \right\rangle$$
\end{proposition}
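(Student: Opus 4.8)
The plan is to use the separability of $C$ to reduce the computation of $C^{\bot}$ to the duals of two single cyclic codes over $R$, and then to reduce those to the finite-field situation through the idempotent decomposition. First, since $C$ is separable, Definition 2 and Lemma 6 give $C = C_m \times C_n$ with $\ell_{v_i}(x)=0$, where $C_m = \langle \iota(x) \rangle \subseteq R_m[x]$ and $C_n = \langle \theta(x)\rangle \subseteq R_n[x]$ are single cyclic codes of lengths $m$ and $n$. The first genuine step is to establish that the dual of a direct product is the product of the duals. Because the inner product splits as $\langle c,d\rangle = \sum_{i} c^1_i d^1_i + \sum_{j} c^2_j d^2_j$, a word $d=(d^1|d^2)$ lies in $C^{\bot}$ precisely when $\langle d,(c^1|0)\rangle = 0$ and $\langle d,(0|c^2)\rangle = 0$ for all $c^1 \in C_m$ and $c^2 \in C_n$; these two families of conditions are exactly $d^1 \in C_m^{\bot}$ and $d^2 \in C_n^{\bot}$. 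Hence $C^{\bot} = C_m^{\bot} \times C_n^{\bot}$, so $C^{\bot}$ is separable, and it remains only to identify the two factors.

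By symmetry I treat $C_m^{\bot}$ alone. Using Facts 1--3 I decompose $C_m = \oplus_{i=1}^{3} v_i (C_m)_{v_i}$ with $(C_m)_{v_i} = \langle \iota_{v_i}(x)\rangle$ over $\mathbb{F}_q$ and $\iota_{v_i}(x)\mid x^m-1$, so that $C_m^{\bot} = \oplus_{i=1}^{3} v_i (C_m)_{v_i}^{\bot}$. For each component I invoke the classical description of the dual of a cyclic code over a field: writing the check polynomial $h_{v_i}(x) = (x^m-1)/\iota_{v_i}(x)$, one has $(C_m)_{v_i}^{\bot} = \langle h_{v_i}^*(x)\rangle$, the ideal generated by the monic reciprocal of $h_{v_i}(x)$.

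The computational heart is the identity $h_{v_i}^*(x) = (x^m-1)/\iota_{v_i}^*(x)$. To obtain it I apply the monic-reciprocal map to $x^m-1 = \iota_{v_i}(x)\,h_{v_i}(x)$; by the multiplicativity of the reciprocal over $\mathbb{F}_q[x]$ (the field identity underlying Lemma 7) this gives $(x^m-1)^* = \iota_{v_i}^*(x)\,h_{v_i}^*(x)$, and a direct check against the normalization in Definition 3 (the lowest-degree coefficient of $x^m-1$ is $-1$) shows $(x^m-1)^* = x^m-1$, whence $h_{v_i}^*(x) = (x^m-1)/\iota_{v_i}^*(x)$. Recombining over the idempotents, I use Lemma 7 and Definition 3 to see that the three field reciprocals assemble into $\iota^*(x)$, and Lemma 1 together with Remark 1 to see that $\iota_{v_i}^*(x)\mid x^m-1$ for each $i$ makes $(x^m-1)/\iota^*(x) = \sum_{i=1}^{3}\frac{x^m-1}{\iota_{v_i}^*(x)}v_i$ a well-defined element of $R_m[x]$; this yields $C_m^{\bot} = \langle (x^m-1)/\iota^*(x)\rangle$, and the identical argument gives $C_n^{\bot} = \langle (x^n-1)/\theta^*(x)\rangle$. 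Combined with the product structure from the first step, this produces the claimed generators of $C^{\bot}$. I expect the only delicate point to be the bookkeeping that keeps both the reciprocal and the quotient well-defined over the non-field ring $R$, while the field-level dual formula and the product-dual step are routine.
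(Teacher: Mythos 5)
Your proposal is correct and follows essentially the same route as the paper: use separability to get $C^{\bot}=C_m^{\bot}\times C_n^{\bot}$, then identify each factor via the dual of a single cyclic code over $R$. The only difference is that where the paper simply cites \cite{Melakhessou} for the generators of $C_m^{\bot}$ and $C_n^{\bot}$, you unpack that citation explicitly (idempotent decomposition via Facts 1--3, the field-level check-polynomial formula, multiplicativity of the reciprocal, and $(x^m-1)^{*}=x^m-1$), which is sound.
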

\begin{proof}
Since $C$ is separable, we have $C=C_m \times C_n$. Thus it is easy to get $C^{\bot}=C_m^{\bot} \times C_n^{\bot}$. For the reference about cyclic codes over $\mathbb{F}_{q}+v\mathbb{F}_{q}+v^2\mathbb{F}_q$ in \cite{Melakhessou}, we can achieve that $$C^{\bot}= \left\langle  \left( \frac{x^m-1}{\iota^{*}(x)}|0\right),\left( 0|\frac{x^n-1}{\theta^{*}(x)}\right) \right\rangle =\left\langle \left( \sum_{i=1}^{3}\frac{x^m-1}{\iota^{*}_{v_i}(x)}v_i|0\right),\left( 0|\sum_{i=1}^{3}\frac{x^n-1}{\theta^{*}_{v_i}(x)}v_i\right) \right\rangle.$$   
\end{proof}

Let $\omega_m(x)$ represent the polynomial $\sum_{i=0}^{m-1} x^i$. Using this symbol, We can easily justify that
\begin{lemma}
Let $m, n \in \mathbb{N}$, then $x^{mn}-1=(x^m-1) \omega_n(x^m)$
\end{lemma}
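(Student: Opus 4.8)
The plan is to reduce the identity to the elementary factorization of $y^n-1$ in a commutative ring via the substitution $y=x^m$. First I would unwind the definition of $\omega_n$: since $\omega_n(x)=\sum_{i=0}^{n-1}x^i$, we have $\omega_n(x^m)=\sum_{i=0}^{n-1}(x^m)^i=\sum_{i=0}^{n-1}x^{mi}$. This rewrites the right-hand side $(x^m-1)\omega_n(x^m)$ as a product whose shape is precisely a telescoping geometric sum in the auxiliary variable $y=x^m$.

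Next I would invoke (or verify in one line) the standard identity $y^n-1=(y-1)\sum_{i=0}^{n-1}y^i$, valid over any commutative ring with identity and in particular over $\mathbb{F}_q[x]$; it follows immediately by expanding the product $(y-1)\sum_{i=0}^{n-1}y^i=\sum_{i=0}^{n-1}y^{i+1}-\sum_{i=0}^{n-1}y^i$ and cancelling all intermediate powers. Substituting $y=x^m$ then yields $(x^m)^n-1=(x^m-1)\sum_{i=0}^{n-1}(x^m)^i=(x^m-1)\omega_n(x^m)$. Finally, since $(x^m)^n=x^{mn}$, the left-hand side equals $x^{mn}-1$, which is exactly the claimed factorization.

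I do not anticipate a genuine obstacle here: the statement is a purely formal polynomial identity, and the only thing to be careful about is bookkeeping of exponents, namely that $\omega_n$ is evaluated at $x^m$ (not $x$) and that the outer factor is $x^m-1$ (not $x-1$). Consequently the entire argument is a direct computation, and the telescoping step is the single substantive ingredient.
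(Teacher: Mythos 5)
Your proof is correct, and it is exactly the standard telescoping/geometric-series argument with the substitution $y=x^m$ that the authors clearly have in mind: the paper states this lemma without proof, remarking only that it is easy to justify. Your write-up simply supplies the omitted one-line computation, so there is nothing to compare beyond noting that you have filled in the expected details.
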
 
\begin{definition}
Let $c(x)=\left( \sum_{i=1}^{3}c^1_{v_i}(x)v_i|\sum_{i=1}^{3}c^2_{v_i}(x)v_i\right), d(x)=\left( \sum_{i=1}^{3}d^1_{v_i}(x)v_i|\sum_{i=1}^{3}d^2_{v_i}(x)v_i\right) $ be two elements in $R_{m,n}[x]$. We define the map $\circ: R_{m,n}[x] \times R_{m,n}[x] \rightarrow R_l[x]$ with $$\circ (c(x),d(x)) = \sum_{i=1}^{3}\left( c^1_{v_i}(x) \omega_{\frac{l}{m}}(x^m) x^{l-1-\deg(d^1_{v_i}(x))} d^{1\ast}_{v_i}(x)+c^2_{v_i}(x) \theta_{\frac{l}{m}}(x^m) x^{l-1-\deg(d^2_{v_i}(x))} d^{2\ast}_{v_i}(x)\right) v_i,$$ 
where the right side of equality is the combination of polynomials module $x^l-1$. 
\end{definition}

For the sake of simplicity and convenientce, we denote $\circ (c(x),d(x))$ by $c(x) \circ d(x)$. 
\begin{lemma}
Let $c,d$ be two vectors in $R^m \times R^n$, with corresponding polynomials 
$$c(x)=\left( \sum_{i=1}^{3}c^1_{v_i}(x)v_i|\sum_{i=1}^{3}c^2_{v_i}(x)v_i\right) , d(x)=\left( \sum_{i=1}^{3}d^1_{v_i}(x)v_i|\sum_{i=1}^{3}d^2_{v_i}(x)v_i\right)$$
respectively. Then $c$ is orthogonal to $d$ and all its shift if and only if $c(x) \circ d(x)  \equiv 0$.
\end{lemma}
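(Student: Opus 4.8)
The plan is to reduce to the field case through the idempotent splitting and then to compare $c(x)\circ d(x)$ with the family of inner products $\langle c,\mathsf{T}^{k}(d)\rangle$ coefficient by coefficient. First I would use $R=\bigoplus_{i=1}^{3}\mathbb{F}_q v_i$: because $v_1,v_2,v_3$ are orthogonal idempotents one has $\langle c,d\rangle=\sum_{i=1}^{3}\langle c_{v_i},d_{v_i}\rangle v_i$ with the ordinary $\mathbb{F}_q$-inner product on the right, the shift $\mathsf{T}$ commutes with every projection $P_{v_i}$, and by Definition 4 the product $\circ$ is computed componentwise in the $v_i$. Hence $\langle c,\mathsf{T}^{k}(d)\rangle=0$ for all $k$ if and only if $\langle c_{v_i},\mathsf{T}^{k}(d_{v_i})\rangle=0$ for all $k$ and all $i$, while $c(x)\circ d(x)\equiv 0$ if and only if each of its three $v_i$-components vanishes. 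This lets me drop the subscript and argue over $\mathbb{F}_q$ for one fixed component, recombining at the end.

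The heart of the argument is the identity that, for $0\le k\le l-1$, the coefficient of $x^{k}$ in $c(x)\circ d(x)$ equals $\langle c,\mathsf{T}^{k+1}(d)\rangle$ (up to the fixed nonzero constant produced by the monic normalization of the reciprocal). I would prove this for the two summands separately. For the first summand, note that $x^{\,l-1-\deg d^{1}}\,d^{1\ast}(x)$ is, up to that constant, the full reversal $x^{\,l-1}d^{1}(x^{-1})=x^{\,l-m}\widetilde{d^{1}}(x)$, where $\widetilde{d^{1}}(x)=x^{\,m-1}d^{1}(x^{-1})$ has degree $m-1$. The decisive congruence is \[\omega_{l/m}(x^{m})\,x^{\,l-m}\equiv\omega_{l/m}(x^{m})\pmod{x^{l}-1},\] which holds because multiplication by $x^{\,l-m}$ only permutes the exponent set $\{0,m,\dots,l-m\}$ modulo $l$. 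Combined with the Lemma 8 factorization $(x^{m}-1)\,\omega_{l/m}(x^{m})=x^{l}-1$, this shows the first summand is congruent modulo $x^{l}-1$ to $\bigl(c^{1}(x)\widetilde{d^{1}}(x)\bmod(x^{m}-1)\bigr)\,\omega_{l/m}(x^{m})$.

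Next I would read off coefficients. The classical reversal identity for a single cyclic length gives that $c^{1}(x)\widetilde{d^{1}}(x)\bmod(x^{m}-1)$ has coefficient $\langle c^{1},\mathsf{T}^{r+1}(d^{1})\rangle$ at $x^{r}$ for $0\le r<m$, proved by writing out the convolution and reading the indices modulo $m$. Multiplying by $\omega_{l/m}(x^{m})=1+x^{m}+\cdots+x^{\,l-m}$ copies this length-$m$ coefficient block $l/m$ times into the positions $x^{\,jm+r}$, and since $j\mapsto\langle c^{1},\mathsf{T}^{j}(d^{1})\rangle$ is $m$-periodic the coefficient at $x^{\,jm+r}$ is exactly $\langle c^{1},\mathsf{T}^{(jm+r)+1}(d^{1})\rangle$. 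The second summand is handled identically with $n$ in place of $m$ (here $\theta_{l/m}(x^{m})$ should be read as $\omega_{l/n}(x^{n})$). Adding the two, the coefficient of $x^{k}$ in $c(x)\circ d(x)$ becomes $\langle c^{1},\mathsf{T}^{k+1}(d^{1})\rangle+\langle c^{2},\mathsf{T}^{k+1}(d^{2})\rangle=\langle c,\mathsf{T}^{k+1}(d)\rangle$, once the normalization is handled as noted below. Since $\mathsf{T}^{l}=\mathrm{id}$, the shifts $\mathsf{T}^{k+1}(d)$ for $k=0,\dots,l-1$ run through all cyclic shifts of $d$, so $c(x)\circ d(x)\equiv 0$ (all $l$ coefficients vanish) exactly when $c$ is orthogonal to $d$ and every shift of $d$; recombining over $v_1,v_2,v_3$ then yields the statement over $R$.

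I expect the main obstacle to be the index bookkeeping in the middle steps: making the degree shift $x^{\,l-1-\deg d}$ interact correctly with $\omega_{l/m}(x^{m})$ modulo $x^{l}-1$, and simultaneously verifying the reversal identity that converts a product modulo $x^{m}-1$ into the shifted inner products. The one genuinely delicate point is the normalization: the monic reciprocals of $d^{1}$ and $d^{2}$ may carry different unit factors, so to guarantee that the two summands add up to the true inner product $\langle c,\mathsf{T}^{k+1}(d)\rangle$ one should either use the unnormalized reversal $x^{\deg d}d(x^{-1})$ or check that these units are common; once this is settled everything else is a formal consequence of the componentwise reduction.
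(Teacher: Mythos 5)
Your argument is correct and takes essentially the same route as the paper's own proof: expand $c(x)\circ d(x)$ modulo $x^{l}-1$ and identify the coefficient of $x^{l-1-s}$ with the inner product of $c$ against the $s$-th cyclic shift of $d$, which is precisely the paper's computation of $\sum_{s=0}^{l-1}\Delta_{s}x^{l-1-s}$, with your idempotent reduction and reversal identity just spelling out steps the paper leaves implicit. The normalization subtlety you flag at the end is real --- the paper's displayed expansion silently uses the unnormalized reversal, and if the unit factors $(lc(d^{1}_{v_i}(x)))^{-1}$ and $(lc(d^{2}_{v_i}(x)))^{-1}$ differed the coefficientwise identification would only yield a unit-weighted combination of the two partial sums rather than $\Delta_s$ itself --- so your proposed repair (use the plain reversal $x^{\deg d}d(x^{-1})$, or verify the units agree) is exactly what is needed to make the stated equivalence airtight.
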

\begin{proof}
Let $d_{(s)}=(d^1_{0+s},\cdots,d^1_{m-1+s} \mid d^2_{0+s},\cdots,d^2_{n-1+s})$ be the $s$-th cyclic shift of vector $d$, $0 \le s \le l-1$. We know that $\left\langle c,d_{(s)}\right\rangle =0$ if and only if $\sum_{k_1=0}^{m-1} c^1_{k_1} d^1_{k_1+s} +\sum_{k_2=0}^{n-1} c^2_{k_2} d^2_{k_2+s}=0$. Fix $\mathsf{\Delta}_s = \sum_{k_1=0}^{m-1} c^1_{k_1} d^1_{k_1+s} +\sum_{k_2=0}^{n-1} c^2_{k_2} d^2_{k_2+s}$, we can get that
\begin{equation} \begin{split}
c(x) \circ d(x) & = \sum_{i=0}^{m-1} \left( \omega_{\frac{l}{m}}(x^m) \sum_{k_1=0}^{m-1} c^1_{k_1} d^1_{k_1+i} x^{l-1-i}\right) +\sum_{j=0}^{n-1}\left( \omega_{\frac{l}{n}}(x^n) \sum_{k_2=0}^{n-1} c^2_{k_2} d^2_{k_2+j} x^{l-1-j}\right)  \notag \\
& = \omega_{\frac{l}{m}}(x^m) \left[ \sum_{i=0}^{m-1} \sum_{k_1=0}^{m-1}c^1_{k_1}d^1_{k_1+i}x^{l-1-i}\right] +\omega_{\frac{l}{n}}(x^n)\left[ \sum_{j=0}^{m-1} \sum_{k_2=0}^{n-1}c^2_{k_2} d^2_{k_2+j}x^{l-1-j}\right] \notag \\
&= \sum_{s=0}^{l-1} \mathsf{\Delta}_s x^{l-1-s} \notag \\
\end{split} \end{equation}
in $R[x]/(x^l-1)$. Hence $c(x) \circ d(x)=0$ if and only if $\mathsf{\Delta}_s=0$ for all $0 \le s \le l-1$.
\end{proof}
\begin{lemma}
Let $c(x)=(c^1(x)|c^2(x))$, $d(x)=(d^1(x)|d^2(x))$ are two elements in $R_{m,n}[x]$, such that $c(x) \circ d(x) =0$ mod $(x^l-1)$. If $c^1(x) \equiv 0$ or $d^1(x) \equiv 0$, then $c^2(x)d^{2\ast}(x)=0$ mod $(x^n-1)$. Respectively, if $c^2(x) \equiv 0$ or $d^2(x) \equiv 0$, then $c^1(x)d^{1\ast}(x)=0$ mod $(x^m-1)$.
\end{lemma}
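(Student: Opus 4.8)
The plan is to exploit the additive splitting of the $\circ$-product into a ``first-block'' part and a ``second-block'' part, and then to use the factorization of $x^l-1$ supplied by Lemma 8 to collapse the modulus from $x^l-1$ down to $x^n-1$ (resp.\ $x^m-1$). Writing $c(x)\circ d(x)=A(x)+B(x)$, where
$$A(x)=\sum_{i=1}^{3}c^1_{v_i}(x)\,\omega_{\frac{l}{m}}(x^m)\,x^{l-1-\deg(d^1_{v_i}(x))}\,d^{1\ast}_{v_i}(x)\,v_i$$
collects the contributions of the first coordinate and $B(x)$ collects those of the second (carrying the factor $\omega_{\frac{l}{n}}(x^n)$ in place of $\omega_{\frac{l}{m}}(x^m)$), I first observe that every summand of $A(x)$ contains both a factor $c^1_{v_i}(x)$ and a factor $d^{1\ast}_{v_i}(x)$. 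Hence under either hypothesis $c^1(x)\equiv0$ or $d^1(x)\equiv0$ (so that every $c^1_{v_i}(x)=0$, respectively every $d^1_{v_i}(x)=0$, with the convention that the reciprocal of $0$ is $0$), the whole block $A(x)$ vanishes, and the assumption $c(x)\circ d(x)\equiv0$ reduces to $B(x)\equiv0\pmod{x^l-1}$.

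Next I would project onto the orthogonal idempotents: since the $v_i$-components are independent, $B(x)\equiv0$ is equivalent to the three scalar congruences
$$c^2_{v_i}(x)\,\omega_{\frac{l}{n}}(x^n)\,x^{l-1-\deg(d^2_{v_i}(x))}\,d^{2\ast}_{v_i}(x)\equiv0\pmod{x^l-1},\quad i=1,2,3,$$
now living in $\mathbb{F}_q[x]$. The monomial $x^{l-1-\deg(d^2_{v_i}(x))}$ is a unit modulo $x^l-1$ (as $\gcd(x,x^l-1)=1$), so it may be discarded from the divisibility. Invoking Lemma 8 in the form $x^l-1=(x^n-1)\,\omega_{\frac{l}{n}}(x^n)$, the congruence reads $(x^n-1)\,\omega_{\frac{l}{n}}(x^n)\mid\omega_{\frac{l}{n}}(x^n)\,c^2_{v_i}(x)\,d^{2\ast}_{v_i}(x)$; cancelling the common nonzero factor $\omega_{\frac{l}{n}}(x^n)$ in the integral domain $\mathbb{F}_q[x]$ yields $c^2_{v_i}(x)\,d^{2\ast}_{v_i}(x)\equiv0\pmod{x^n-1}$ for each $i$. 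Reassembling through the Chinese Remainder decomposition, and using the remark that $(d^{2\ast}(x))_{v_i}=(d^2_{v_i}(x))^{\ast}$, I recover $c^2(x)\,d^{2\ast}(x)=\sum_{i=1}^{3}c^2_{v_i}(x)\,d^{2\ast}_{v_i}(x)\,v_i\equiv0\pmod{x^n-1}$, which is the first claim.

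The second claim is entirely symmetric: swapping the roles of the two blocks, the hypothesis $c^2(x)\equiv0$ or $d^2(x)\equiv0$ kills $B(x)$, leaving $A(x)\equiv0$, and the factorization $x^l-1=(x^m-1)\,\omega_{\frac{l}{m}}(x^m)$ from Lemma 8 reduces the modulus to $x^m-1$ in exactly the same fashion, giving $c^1(x)\,d^{1\ast}(x)\equiv0\pmod{x^m-1}$.

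I expect the only genuine subtlety --- rather than a true obstacle --- to be the two cancellation steps: one must justify that the factors $x^{l-1-\deg(d^2_{v_i}(x))}$ and $\omega_{\frac{l}{n}}(x^n)$ may be removed from the divisibility. The first is legitimate because $x$ is invertible modulo $x^l-1$, and the second because, after fixing the $v_i$-component, we work in the integral domain $\mathbb{F}_q[x]$, where cancellation of the nonzero polynomial $\omega_{\frac{l}{n}}(x^n)$ is valid. The clean factorization of $x^l-1$ furnished by Lemma 8 is precisely what makes both cancellations line up and lets the modulus drop from $x^l-1$ to $x^n-1$.
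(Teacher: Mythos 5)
Your proposal is correct and follows essentially the same route as the paper: kill the block corresponding to the vanishing coordinate, project onto the idempotents $v_1,v_2,v_3$ to reduce to congruences in $\mathbb{F}_q[x]$, invoke Lemma 8 to factor $x^l-1$ as $(x^n-1)\omega_{\frac{l}{n}}(x^n)$ (resp.\ $(x^m-1)\omega_{\frac{l}{m}}(x^m)$), and cancel the common factor together with the unit power of $x$. The only cosmetic difference is that the paper removes the monomial by multiplying up to $x^l$ and citing $\gcd(x^m-1,x^l)=1$, whereas you discard it directly as a unit modulo $x^l-1$ --- the same fact in different clothing.
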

\begin{proof}
Let $c^2(x)$ or $d^2(x)$ equal to 0  module $x^n-1$. It means that $c^2_{v_i}(x) \equiv 0$, $i=1,2,3$ or $d^2_{v_i}(x) \equiv 0$, $i=1,2,3$. By the concrete definition forms of $\circ$ in $R[x]$, we have  $$c(x) \circ d(x) = \sum_{i=1}^{3}\left( c^1_{v_i}(x) \omega_{\frac{l}{m}}(x^m) x^{l-1-\deg(d^1_{v_i}(x))} d^{1\ast}_{v_i}(x)\right) v_i=0 \mod (x^l-1).$$
Hence there exists a polynomial $\pi(x) \in R[x]$ with $\pi(x)=\sum_{i=1}^{3}\pi_{v_i}(x)v_i$, such that $$\sum_{i=1}^{3}\left( c^1_{v_i}(x) \omega_{\frac{l}{m}}(x^m) x^{l-1-\deg(d^1_{v_i}(x))} d^{1\ast}_{v_i}(x)\right) v_i=\pi(x)(x^l-1)=\sum_{i=1}^{3}\pi_{v_i}(x)(x^l-1)v_i.$$
Then $c^1_{v_i}(x) \omega_{\frac{l}{m}}(x^m) x^{l-1-\deg(d^1_{v_i}(x))} d^{1\ast}_{v_i}(x)=\pi_{v_i}(x)(x^l-1), i=1,2,3.$ Let $\Pi(x)=\sum_{i=1}^{3}\Pi_{v_i}(x)v_i$ with $\Pi_{v_i}(x)=x^{\deg(d^1_{v_i}(x))+1}\pi_{v_i}(x), i=1,2,3$, we have $c^1_{v_i}(x) \omega_{\frac{l}{m}}(x^m) x^l d^{1\ast}_{v_i}(x)=\Pi_{v_i}(x)(x^l-1), i=1,2,3.$ By Lemma 8, we get $x^l-1=\omega_{\frac{l}{m}}(x^m)(x^m-1)$. So $c^1_{v_i}(x) d^{1\ast}_{v_i}(x) x^l =\Pi_{v_i}(x)(x^m-1) \text{ } i=1,2,3.$ It means that $x^m-1|c^1_{v_i}(x) d^{1\ast}_{v_i}(x) x^l, i=1,2,3.$ Obviously, $x^m-1$ and $x^l$ are co-prime to each other. This leads to $x^m-1|c^1_{v_i}(x) d^{1\ast}_{v_i}(x), i=1,2,3$.
Hence $$\sum_{i=1}^{3}c^1_{v_i}(x)d^{1\ast}_{v_i}(x)v_i=c^1(x)d^{1\ast}(x)=0 \mod (x^m-1).$$ The same assertion can be proved for the other cases.
\end{proof}

\begin{proposition}
Let $\thickmuskip=0mu \medmuskip=0mu C=\left\langle (\iota(x)|0),(\ell(x)|\theta(x)) \right\rangle =\left\langle (\sum_{i=1}^3\iota_{v_i}(x)v_i|0),(\sum_{i=1}^3\ell_{v_i}(x)v_i| \sum_{i=1}^3 \theta_{v_i}(x)v_i)\right\rangle$ be a double cyclic code of length $(m,n)$ over $R$ with $$C^{\bot}=\left\langle (\overline{\iota}(x)|0),(\overline{\ell}(x)|\overline{\theta}(x)) \right\rangle =\left\langle \left( \sum_{i=1}^3\overline{\iota}_{v_i}(x)v_i|0\right),\left( \sum_{i=1}^3\overline{\ell}_{v_i}(x)v_i| \sum_{i=1}^3 \overline{\theta}_{v_i}(x)v_i\right) \right\rangle.$$	
Then $\overline{\iota}(x)=\frac{x^m-1}{\gcd^{*}(\iota(x),\ell(x))}=\sum_{i=1}^{3}\frac{x^m-1}{\gcd^{*}(\iota_{v_i}(x),\ell_{v_i}(x))}v_i$.
\end{proposition}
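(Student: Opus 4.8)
The plan is to pin down $\overline{\iota}(x)$ intrinsically as the generator of a Euclidean dual, and then apply the known duality formula for single cyclic codes over $R$ component by component. First I would recall that, applying Proposition 2 to $C^{\bot}$, the polynomial $\overline{\iota}(x)$ is exactly the generator of the ideal $\overline{I}=\{a(x)\in R_m[x]\mid (a(x)|0)\in C^{\bot}\}$, with $\overline{\iota}_{v_i}(x)\mid x^m-1$ for $i=1,2,3$. The first key step is to recognize this ideal as an ordinary dual: since $\langle (a(x)|0),(d^1(x)|d^2(x))\rangle=\langle a(x),d^1(x)\rangle$, a codeword $(a(x)|0)$ lies in $C^{\bot}$ if and only if $a(x)$ is orthogonal to every first coordinate occurring in $C$, i.e. to every element of $C_m=P_m(C)$. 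Hence $\overline{I}=(C_m)^{\bot}$, the Euclidean dual of the length-$m$ cyclic code $C_m$ over $R$.

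The second key step is to determine $C_m$ itself. Because $C=\langle(\iota(x)|0),(\ell(x)|\theta(x))\rangle$, its image $C_m=P_m(C)$ is the ideal of $R_m[x]$ generated by $\iota(x)$ and $\ell(x)$. Passing to each idempotent component $\mathbb{F}_q[x]/\langle x^m-1\rangle$ and using that $\iota_{v_i}(x)\mid x^m-1$ (Proposition 2), one has $\gcd(\iota_{v_i}(x),\ell_{v_i}(x),x^m-1)=\gcd(\iota_{v_i}(x),\ell_{v_i}(x))$, so the two-generator ideal $\langle \iota_{v_i}(x),\ell_{v_i}(x)\rangle$ collapses to $\langle \gcd(\iota_{v_i}(x),\ell_{v_i}(x))\rangle$; reassembling across the $v_i$ by Lemma 2 gives $C_m=\langle \gcd(\iota(x),\ell(x))\rangle$, which is precisely the description already used in the proof of Proposition 4.

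With these two identifications in hand I would invoke the single-cyclic-code duality over $R$. Writing $g(x)=\gcd(\iota(x),\ell(x))$, so that $g_{v_i}(x)\mid x^m-1$, Facts 3 together with the classical formula for the dual of a cyclic code over $\mathbb{F}_q$ shows that $(C_m)^{\bot}$ is generated, component by component, by the reciprocal of the check polynomial $\tfrac{x^m-1}{g_{v_i}(x)}$. Here I use Lemma 6 and the self-reciprocity of $x^m-1$ (its monic reciprocal is itself) to rewrite $\bigl(\tfrac{x^m-1}{g_{v_i}(x)}\bigr)^{*}=\tfrac{x^m-1}{g_{v_i}^{*}(x)}$. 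Finally, the componentwise compatibilities $P_{v_i}(\gcd(\cdot,\cdot))=\gcd(P_{v_i}(\cdot),P_{v_i}(\cdot))$ and $(r^{*})_{v_i}=(r_{v_i})^{*}$ (the remarks following Lemma 2 and Definition 4) let me package the three components as $\overline{\iota}(x)=\sum_{i=1}^{3}\tfrac{x^m-1}{\gcd^{*}(\iota_{v_i}(x),\ell_{v_i}(x))}v_i=\tfrac{x^m-1}{\gcd^{*}(\iota(x),\ell(x))}$, which is the claim; as a sanity check the degrees agree with Corollary 2.

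I expect the main obstacle to be the structural bookkeeping in the first two steps, namely establishing $\overline{I}=(C_m)^{\bot}$ and verifying $C_m=\langle\gcd(\iota(x),\ell(x))\rangle$. In particular, the collapse of the two-generator ideal to the single gcd generator genuinely rests on $\iota_{v_i}(x)\mid x^m-1$ and not on any divisibility of $\ell_{v_i}(x)$, so this hypothesis must be used at exactly the right place. Once the orthogonality identification is secured, the remaining manipulations—self-reciprocity of $x^m-1$, and the componentwise compatibility of $\gcd$ and of the reciprocal with the projections $P_{v_i}$—are routine given the lemmas already proved.
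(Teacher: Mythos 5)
Your proof is correct, but it takes a genuinely different route from the paper's. The paper works through its $\circ$-product machinery: from $(\overline{\iota}(x)|0)\in C^{\bot}$ and Lemmas 9--10 it extracts the divisibilities $(x^m-1)\mid\overline{\iota}^{*}(x)\iota(x)$ and $(x^m-1)\mid\overline{\iota}^{*}(x)\ell(x)$, hence $(x^m-1)\mid\overline{\iota}^{*}(x)\gcd(\iota(x),\ell(x))$ componentwise, and then forces equality $x^m-1=\overline{\iota}^{*}_{v_i}(x)\gcd(\iota_{v_i}(x),\ell_{v_i}(x))$ by the degree count of Corollary 2. You instead bypass the $\circ$-form entirely: you observe that $\langle(a(x)|0),(d^1(x)|d^2(x))\rangle=\langle a(x),d^1(x)\rangle$, so the ideal $\overline{I}=\{a(x):(a(x)|0)\in C^{\bot}\}=\langle\overline{\iota}(x)\rangle$ is exactly the Euclidean dual of $C_m=\langle\gcd(\iota(x),\ell(x))\rangle$, and then you quote the single-cyclic-code duality over $R$ (Facts 3 plus the classical $\mathbb{F}_q$ formula, already invoked by the paper in Proposition 5) together with Lemma 6 and the self-reciprocity of $x^m-1$. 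Your identifications $\overline{I}=(C_m)^{\bot}$ and $C_m=\langle\gcd(\iota(x),\ell(x))\rangle$ are both sound (the latter is also asserted, without proof, in the paper's Proposition 4, which in turn feeds Corollary 2 --- so the two arguments ultimately rest on the same structural fact, yours just one step closer to the surface). What your route buys is a shorter, more conceptual proof that needs no degree bookkeeping and no orthogonality-of-all-shifts lemma; what the paper's route buys is uniformity, since the same $\circ$-form technique is reused for the harder identities for $\overline{\theta}(x)$ and $\overline{\ell}(x)$ in Propositions 7 and 8, where a clean reduction to a single cyclic dual is not available. The only point where you are no more careful than the paper is the implicit uniqueness of the normalized generator (each component monic and dividing $x^m-1$), which is needed to pass from ``generates the same ideal'' to the stated equality of polynomials.
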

\begin{proof}
Obviously, $(\overline{\iota}(x)|0)$ belongs to $C^{\bot}$. As a consequence of Lemma 9, we have
$$\begin{cases}
(\overline{\iota}(x)|0) \circ (\iota(x)| 0)       &=0 \mod (x^l-1), \\
(\overline{\iota}(x)|0) \circ (\ell(x)|\theta(x)) &=0 \mod (x^l-1). \\
\end{cases}$$
Therefore, by Lemma 10, we also have
$$\begin{cases} 
\overline{\iota}^{*}(x) \iota(x) &=0 \mod (x^m-1) \iff (x^m-1)|\overline{\iota}^{*}(x) \iota(x),\\ 
\overline{\iota}^{*}(x) \ell(x) &=0 \mod (x^m-1) \iff (x^m-1)|\overline{\iota}^{*}(x) \ell(x).  \\ 
\end{cases}$$ 
Then we get $x^m-1|\gcd(\overline{\iota}^{*}(x) \iota(x),\overline{\iota}^{*}(x) \ell(x))|\overline{\iota}^{*}(x) \gcd(\iota(x),\ell(x))$. While $x^m-1|\overline{\iota}^{*}(x) \gcd(\iota(x),\ell(x))$ if and only if 
$$x^m-1|\overline{\iota}^{*}_{v_i}(x) {\gcd}_{v_i}(\iota(x),\ell(x))=\overline{\iota}^{*}_{v_i}(x)\gcd(\iota_{v_i}(x),\ell_{v_i}(x)), i=1,2,3.$$
Since all of $\overline{\iota}^{*}_{v_i}(x), \gcd(\iota_{v_i}(x),\ell_{v_i}(x))$, $i=1,2,3$ are factors of $x^m-1$, By Corollary 2, we have $$\deg(\overline{\iota}^{*}_{v_i}(x))=\deg(\overline{\iota}_{v_i}(x))=m-\deg(\gcd(\iota_{v_i}(x),\ell_{v_i}(x))), i=1,2,3.$$
Therefore $x^m-1=\overline{\iota}^{*}_{v_i}(x) \gcd_{v_i}(\iota(x),\ell(x))=\overline{\iota}^{*}_{v_i}(x)\gcd(\iota_{v_i}(x),\ell_{v_i}(x))$, $i=1,2,3$.
Then we get $$\overline{\iota}^{*}(x)\gcd(\iota(x),\ell(x))=\sum_{i=1}^{3}\overline{\iota}^{*}_{v_i}(x)\gcd(\iota_{v_i}(x),\ell_{v_i}(x))v_i=\sum_{i=1}^{3}(x^m-1)v_i=x^m-1.$$ Consequently, we have $\overline{\iota}(x) = \frac{x^m-1}{\gcd^{*}(\iota(x),\ell(x))}=\sum_{i=1}^{3}\frac{x^m-1}{\gcd^{*}(\iota_{v_i}(x),\ell_{v_i}(x))}v_i$.
\end{proof}
\begin{proposition}
Let $\thickmuskip=0mu \medmuskip=0mu C=\left\langle (\iota(x)|0),(\ell(x)|\theta(x)) \right\rangle =\left\langle (\sum_{i=1}^3\iota_{v_i}(x)v_i|0),(\sum_{i=1}^3\ell_{v_i}(x)v_i| \sum_{i=1}^3 \theta_{v_i}(x)v_i)\right\rangle$ be an $R$-double cyclic code of length $(m,n)$ with $$C^{\bot}=\left\langle (\overline{\iota}(x)|0),(\overline{\ell}(x)|\overline{\theta}(x)) \right\rangle =\left\langle \left( \sum_{i=1}^3\overline{\iota}_{v_i}(x)v_i|0\right),\left( \sum_{i=1}^3\overline{\ell}_{v_i}(x)v_i| \sum_{i=1}^3 \overline{\theta}_{v_i}(x)v_i\right) \right\rangle.$$
Then $\overline{\theta}(x) = \frac{(x^n-1)\gcd^{*}(\iota(x),\ell(x))}{\iota^{*}(x)\theta^{*}(x)}=\sum_{i=1}^{3}\frac{(x^n-1)\gcd^{*}(\iota_{v_i}(x),\ell_{v_i}(x))}{\iota_{v_i}^{*}(x)\theta_{v_i}^{*}(x)}v_i$.
\end{proposition}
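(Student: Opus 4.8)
The plan is to pin down $\overline{\theta}(x)$ by combining a degree count (already available from Corollary 2) with a divisibility argument coming from the orthogonality machinery of Lemma 9 and Lemma 10, and then to invoke the componentwise decomposition so that everything reduces to the familiar finite-field situation. Because each projection $P_{v_i}$ turns the $R$-problem into a genuine $\mathbb{F}_q[x]$ computation, I would carry out the whole argument at the level of the components $\overline{\theta}_{v_i}(x)$, $\iota_{v_i}(x)$, $\ell_{v_i}(x)$, $\theta_{v_i}(x)$ and then reassemble via $\sum_{i=1}^3(\cdot)v_i$. The target is to show that, for each $i$, the monic polynomial $\overline{\theta}_{v_i}(x)$ equals $\frac{(x^n-1)\gcd(\iota_{v_i}(x),\ell_{v_i}(x))}{\iota^{*}_{v_i}(x)\theta^{*}_{v_i}(x)}$ — note I would first want to check this rational expression is actually a polynomial dividing $x^n-1$, using Lemma 5 together with the factorizations from Proposition 2.

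First I would record that $(\overline{\ell}(x)|\overline{\theta}(x))\in C^{\bot}$, so by Lemma 9 it is orthogonal to the two generators of $C$, giving
$$
\begin{cases}
(\overline{\ell}(x)|\overline{\theta}(x)) \circ (\iota(x)|0) = 0 \mod (x^l-1),\\
(\overline{\ell}(x)|\overline{\theta}(x)) \circ (\ell(x)|\theta(x)) = 0 \mod (x^l-1).
\end{cases}
$$
The first relation has a zero second coordinate, so Lemma 10 yields $\overline{\ell}^{*}(x)\iota(x)=0 \bmod (x^m-1)$, i.e. $(x^m-1)\mid \overline{\ell}^{*}(x)\iota(x)$. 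This controls the first coordinate and pins down how $\overline{\ell}(x)$ relates to $\iota(x)$. Next, knowing $\overline{\ell}(x)$, I would feed the second orthogonality relation into the explicit formula for $\circ$ from Definition 5, separate the first-coordinate contribution (which I already understand) from the second-coordinate contribution, and deduce a divisibility statement of the form $(x^n-1)\mid \overline{\theta}^{*}_{v_i}(x)\,\theta_{v_i}(x)\cdot(\text{correction term involving }\iota_{v_i},\ell_{v_i})$, componentwise in $\mathbb{F}_q[x]$.

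Then, exactly as in the proof of Proposition 7, I would close the argument by a degree match: Corollary 2 already furnishes $\deg(\overline{\theta}_{v_i}(x))=n-\deg(\iota_{v_i}(x))-\deg(\theta_{v_i}(x))+\deg(\gcd(\iota_{v_i}(x),\ell_{v_i}(x)))$, and since all the polynomials in sight are monic divisors of $x^n-1$, a divisibility $a\mid b$ between two such divisors of equal degree forces $a=b$. Combining the divisibility from the previous step with this exact degree, together with $r^{**}(x)=r(x)$ (Remark 6), $(fg)^*=f^*g^*$ (Lemma 6), and $P_{v_i}(\gcd)=\gcd(P_{v_i},P_{v_i})$ (Remark 3), I would obtain $\overline{\theta}^{*}_{v_i}(x)\,\iota^{*}_{v_i}(x)\,\theta^{*}_{v_i}(x)=(x^n-1)\gcd(\iota_{v_i}(x),\ell_{v_i}(x))$ for each $i$, and then take reciprocals and sum over $i$ with the idempotents $v_i$ to recover the claimed closed form.

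The main obstacle I anticipate is the second orthogonality relation: its second coordinate does \emph{not} vanish, so Lemma 10 does not apply directly and I must instead manipulate the full $\circ$-product from Definition 5, using Lemma 8 ($x^l-1=\omega_{l/m}(x^m)(x^m-1)=\omega_{l/n}(x^n)(x^n-1)$) and the coprimality of $x^m-1$ (resp. $x^n-1$) with powers of $x$ to strip off the $\omega$-factors and the shift monomials cleanly, while keeping track of the interaction between the first and second coordinates introduced by the nonzero $\overline{\ell}(x)$ and $\ell(x)$. Getting this bookkeeping right — so that the first-coordinate term, already understood via $\overline{\ell}(x)$ and $\iota(x)$, cancels correctly and leaves a clean divisibility purely in terms of $\overline{\theta}_{v_i}$, $\iota_{v_i}$, $\theta_{v_i}$, and $\gcd(\iota_{v_i},\ell_{v_i})$ — is the delicate part; once that divisibility is isolated, the degree count from Corollary 2 finishes the proof mechanically.
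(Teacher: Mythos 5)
Your endgame is the paper's endgame: once you have the divisibility $(x^n-1)\mid \overline{\theta}_{v_i}(x)\,\iota^{*}_{v_i}(x)\theta^{*}_{v_i}(x)/\gcd^{*}(\iota_{v_i}(x),\ell_{v_i}(x))$, the combination of Lemma 5 (the quotient $\iota^{*}_{v_i}\theta^{*}_{v_i}/\gcd^{*}(\iota_{v_i},\ell_{v_i})$ is a genuine divisor of $x^n-1$), Corollary 2 (the exact degree of $\overline{\theta}_{v_i}$), and the fact that a divisibility between monic divisors of $x^n-1$ whose degrees add to $n$ forces equality does finish the proof, exactly as in the paper. The genuine gap is at the step you yourself flag as ``the delicate part'': you never produce that divisibility, and the route you sketch --- expanding $(\overline{\ell}(x)|\overline{\theta}(x))\circ(\ell(x)|\theta(x))=0$ and separating the two coordinate contributions --- does not close. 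The first-coordinate contribution involves $\overline{\ell}(x)$, which at that stage is known only up to an undetermined multiplier $\rho(x)$ of $\frac{x^m-1}{\iota^{*}(x)}$; you are left with one congruence in the two unknowns $\rho(x)$ and $\overline{\theta}(x)$. Indeed, in the paper this very congruence is exploited in Proposition 9 to solve for $\rho(x)$ \emph{after} the formula for $\overline{\theta}(x)$ is already established, so trying to extract $\overline{\theta}(x)$ from it here is circular (or at best requires information you do not yet have).

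The missing idea is to sidestep the mixed $\circ$-product entirely by exhibiting a codeword of $C$ whose \emph{first} coordinate vanishes, namely $\left(0 \mid \frac{\iota(x)}{\gcd(\iota(x),\ell(x))}\theta(x)\right)=\frac{\iota(x)}{\gcd(\iota(x),\ell(x))}(\ell(x)|\theta(x))-\frac{\ell(x)}{\gcd(\iota(x),\ell(x))}(\iota(x)|0)\in C$. Pairing $(\overline{\ell}(x)|\overline{\theta}(x))$ against this codeword, Lemma 9 gives a vanishing $\circ$-product in which one coordinate is identically zero, so Lemma 10 applies directly and yields $(x^n-1)\mid\overline{\theta}(x)\,\iota^{*}(x)\theta^{*}(x)/\gcd^{*}(\iota(x),\ell(x))$, componentwise via Lemma 1 and Remark 3. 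That is precisely the input your degree argument needs; without this auxiliary codeword (or an equivalent device) the proposal does not go through. A small additional slip: the first orthogonality relation gives $\overline{\ell}(x)\iota^{*}(x)\equiv 0 \bmod (x^m-1)$ (Lemma 10 stars the \emph{second} factor), which constrains $\overline{\ell}(x)$ but does not ``pin it down.''
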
 
\begin{proof}
Considering the codeword  
$$\left( 0|\frac{\iota(x)}{\gcd(\iota(x),\ell(x))}\theta(x) \right) =\frac{\iota(x)}{\gcd(\iota(x),\ell(x))}(\ell(x)|\theta(x))-\frac{\ell(x)}{\gcd(\iota(x),\ell(x))}(\iota(x)|0),$$
we have $\left( 0|\frac{\iota(x)}{\gcd(\iota(x),\ell(x))}\theta(x) \right) \in C$. By Lemma 9, we get 
$$(\overline{\ell}(x)|\overline{\theta}(x)) \circ \left( 0|\frac{\iota(x)}{\gcd(\iota(x),\ell(x))}\theta(x) \right) = 0 \mod (x^l-1).$$ 
Then, from Lemma 10, we obtain that $$\overline{\theta}(x) \frac{\iota^{*}(x)\theta^{*}(x)}{\gcd^{*}(\iota(x),\ell(x))} = 0 \mod (x^n-1) \iff x^n-1|\overline{\theta}(x) \frac{\iota^{*}(x)\theta^{*}(x)}{\gcd^{*}(\iota(x),\ell(x))}.$$
While $x^n-1|\overline{\theta}(x) \frac{\iota^{*}(x)\theta^{*}(x)}{\gcd*(\iota(x),\ell(x))}$ if and only if 
$$x^n-1|\overline{\theta}_{v_i}(x) \frac{\iota^{*}_{v_i}(x)\theta^{*}_{v_i}(x)}{\gcd^{*}_{v_i}(\iota(x),\ell(x))}=\overline{\theta}_{v_i}(x) \frac{\iota^{*}_{v_i}(x)\theta^{*}_{v_i}(x)}{\gcd^{*}(\iota_{v_i}(x),\ell_{v_i}(x))}, i=1,2,3.$$
From Theorem 3, we acquire $\overline{\theta}_{v_i}(x)|(x^n-1), i=1,2,3$. Simultaneously, by Lemma 5, we have that $\frac{\iota^{*}_{v_i}(x)\theta^{*}_{v_i}(x)}{\gcd(\iota_{v_i}(x),\ell_{v_i}(x))}|(x^n-1), i =1,2,3$.
From Corollary 2, we gain $$\deg(\overline{\theta}_{v_i}(x))=n-\deg(\iota_{v_i}(x))-\deg(\theta_{v_i}(x))+\deg(\gcd(\xi_{v_i}(x),\ell_{v_i}(x)), i=1,2,3.$$
Therefore $\deg(\frac{\overline{\theta}_{v_i}(x)\iota^{*}_{v_i}(x)\theta^{*}_{v_i}(x)}{\gcd^{*}(\iota_{v_i}(x),\ell_{v_i}(x))})=n, i=1,2,3$. These indicate that $x^n-1=\frac{\overline{\theta}_{v_i}(x)\iota^{*}_{v_i}(x)\theta^{*}_{v_i}(x)}{\gcd^{*}(\iota_{v_i}(x),\ell_{v_i}(x))}, i=1,2,3$.
Hence $$\overline{\theta}(x) \frac{\iota^{*}(x)\theta^{*}(x)}{\gcd^{*}(\iota(x),\ell(x))}= \sum_{i=1}^{3}\overline{\theta}_{v_i}(x) \frac{\iota^{*}_{v_i}(x)\theta^{*}_{v_i}(x)}{\gcd^{*}(\iota_{v_i}(x),\ell_{v_i}(x))}v_i=\sum_{i=1}^{3}(x^n-1)v_i=(x^n-1)\sum_{i=1}^{3}v_i=x^n-1.$$
Thereby $\overline{\theta}(x)=\frac{(x^n-1)\gcd^{*}(\iota(x),\ell(x))}{\iota^{*}(x)\theta^{*}(x)}=\sum_{i=1}^{3}\frac{(x^n-1)\gcd^{*}(\iota_{v_i}(x),\ell_{v_i}(x))}{\iota^{*}_{v_i}(x)\theta^{*}_{v_i}(x)}v_i.$
\end{proof}
\begin{remark}
We use the fact that $\deg(f^{*}(x))=\deg(f(x))$, for $f(x) \in \mathbb{F}_{q}[x]$ in the above proof.
\end{remark}
\begin{proposition}
Let $\thickmuskip=0mu \medmuskip=0mu C=\left\langle (\iota(x)|0),(\ell(x)|\theta(x)) \right\rangle =\left\langle (\sum_{i=1}^3\iota_{v_i}(x)v_i|0),(\sum_{i=1}^3\ell_{v_i}(x)v_i| \sum_{i=1}^3 \theta_{v_i}(x)v_i)\right\rangle$ be a double cyclic code of length $(m,n)$ over $R$ with $$C^{\bot}=\left\langle (\overline{\iota}(x)|0),(\overline{\ell}(x)|\overline{\theta}(x)) \right\rangle =\left\langle \left( \sum_{i=1}^3\overline{\iota}_{v_i}(x)v_i|0\right),\left( \sum_{i=1}^3\overline{\ell}_{v_i}(x)v_i| \sum_{i=1}^3 \overline{\theta}_{v_i}(x)v_i\right) \right\rangle.$$
Then $\overline{\ell}(x) = \rho(x) \left( \sum_{i=1}^{3}\frac{x^m-1}{\iota^{*}_{v_i}(x)}v_i\right)$, where 
$$\rho(x) = \left( \sum_{i=1}^{3}-x^{l-\deg(\theta_{v_i}(x))+\deg(\iota_{v_i}(x))}v_i\right) \left( \frac{\iota^{*}(x)}{\gcd^{*}(\iota(x),\ell(x))}\right)^{-1} \mod\frac{\iota^{*}(x)}{\gcd^{*}(\iota(x),\ell(x))}.$$
\end{proposition}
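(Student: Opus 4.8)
The generators $\overline{\iota}(x)$ and $\overline{\theta}(x)$ of $C^{\bot}$ are already pinned down by Proposition 5 and Proposition 6, so the only remaining task is to identify $\overline{\ell}(x)$, equivalently the multiplier $\rho(x)$. The plan is to extract two orthogonality relations from the membership $(\overline{\ell}(x)|\overline{\theta}(x))\in C^{\bot}$ and then combine them: the relation against $(\iota(x)|0)$ will fix the \emph{shape} of $\overline{\ell}(x)$, while the relation against $(\ell(x)|\theta(x))$, together with the known value of $\overline{\theta}(x)$, will fix $\rho(x)$ uniquely as a residue.

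First I would use that $(\overline{\ell}(x)|\overline{\theta}(x))$ is orthogonal to $(\iota(x)|0)$ and all of its shifts. By Lemma 9 this reads $(\overline{\ell}(x)|\overline{\theta}(x))\circ(\iota(x)|0)\equiv 0 \pmod{x^l-1}$, and since the second coordinate of $(\iota(x)|0)$ vanishes, Lemma 10 yields $\overline{\ell}(x)\iota^{*}(x)\equiv 0\pmod{x^m-1}$, i.e. componentwise $(x^m-1)\mid \overline{\ell}_{v_i}(x)\iota^{*}_{v_i}(x)$ for $i=1,2,3$. Because $\iota_{v_i}(x)\mid x^m-1$ forces $\iota^{*}_{v_i}(x)\mid x^m-1$, this is equivalent to $\frac{x^m-1}{\iota^{*}_{v_i}(x)}\mid\overline{\ell}_{v_i}(x)$, which is precisely the assertion that $\overline{\ell}(x)=\rho(x)\bigl(\sum_{i=1}^{3}\frac{x^m-1}{\iota^{*}_{v_i}(x)}v_i\bigr)$ for some $\rho(x)=\sum_i\rho_{v_i}(x)v_i\in R[x]$. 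A degree count then shows $\rho$ is free only modulo $\iota^{*}/\gcd^{*}$: applying Lemma 3 to $C^{\bot}$ gives $\deg\overline{\ell}_{v_i}<\deg\overline{\iota}_{v_i}$, and substituting $\deg\overline{\iota}_{v_i}=m-\deg\gcd(\iota_{v_i},\ell_{v_i})$ from Corollary 2 and $\deg\frac{x^m-1}{\iota^{*}_{v_i}}=m-\deg\iota_{v_i}$ leaves $\deg\rho_{v_i}<\deg\iota_{v_i}-\deg\gcd(\iota_{v_i},\ell_{v_i})=\deg\frac{\iota^{*}_{v_i}(x)}{\gcd^{*}(\iota_{v_i},\ell_{v_i})}$, so $\rho$ is determined as a residue modulo $\iota^{*}(x)/\gcd^{*}(\iota(x),\ell(x))$.

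Next I would impose orthogonality against $(\ell(x)|\theta(x))$, namely $(\overline{\ell}(x)|\overline{\theta}(x))\circ(\ell(x)|\theta(x))\equiv0\pmod{x^l-1}$. Expanding $\circ$ componentwise and rewriting the sums via Lemma 8 as $\omega_{l/m}(x^m)=\frac{x^l-1}{x^m-1}$ and $\omega_{l/n}(x^n)=\frac{x^l-1}{x^n-1}$, I would substitute both $\overline{\ell}_{v_i}=\rho_{v_i}\frac{x^m-1}{\iota^{*}_{v_i}}$ and the value $\overline{\theta}_{v_i}=\frac{(x^n-1)\gcd^{*}(\iota_{v_i},\ell_{v_i})}{\iota^{*}_{v_i}\theta^{*}_{v_i}}$ from Proposition 6. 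Both resulting terms then carry the common factor $\frac{x^l-1}{\iota^{*}_{v_i}}$, so after factoring it out the congruence modulo $x^l-1$ collapses to a congruence modulo $\iota^{*}_{v_i}(x)$, and after cancelling the common divisor $\gcd^{*}(\iota_{v_i},\ell_{v_i})$ to one modulo $\iota^{*}_{v_i}/\gcd^{*}(\iota_{v_i},\ell_{v_i})$ of the shape $\rho_{v_i}\cdot(\text{cofactor})\cdot x^{a}\equiv -\gcd^{*}(\iota_{v_i},\ell_{v_i})\,x^{b}$. Since $\gcd(x,\iota^{*}_{v_i})=1$ and the cofactor is coprime to $\iota^{*}_{v_i}/\gcd^{*}(\iota_{v_i},\ell_{v_i})$ (by Lemma 5 and the reciprocal of Lemma 2), both are invertible modulo that polynomial; solving for $\rho_{v_i}$ produces the claimed closed form, and reassembling via $\sum_i(\cdot)v_i$ gives the global expression for $\rho(x)$.

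The main obstacle will be the bookkeeping in this last step: correctly tracking the monomial shifts $x^{l-1-\deg(\cdot)}$ and the monic reciprocals under the $\circ$-product, verifying that the two terms genuinely share the factor $\frac{x^l-1}{\iota^{*}_{v_i}}$, and confirming the coprimality required to invert $x$ and the cofactor modulo $\iota^{*}_{v_i}/\gcd^{*}(\iota_{v_i},\ell_{v_i})$. Everything else is the same component-wise (Chinese-Remainder) reduction used in Propositions 5 and 6, combined with the degree inequality that makes $\rho(x)$ unique modulo $\iota^{*}(x)/\gcd^{*}(\iota(x),\ell(x))$.
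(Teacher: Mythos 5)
Your proposal follows essentially the same route as the paper's proof: Lemma 9 and Lemma 10 applied to $(\overline{\ell}(x)|\overline{\theta}(x))\circ(\iota(x)|0)$ give $\overline{\ell}(x)=\rho(x)\frac{x^m-1}{\iota^{*}(x)}$, and then expanding $(\overline{\ell}(x)|\overline{\theta}(x))\circ(\ell(x)|\theta(x))\equiv 0$ with the known $\overline{\theta}(x)$, factoring out $\frac{x^l-1}{\tilde{\iota}^{*}_{v_i}(x)}$, and inverting $\tilde{\ell}^{*}(x)$ modulo $\tilde{\iota}^{*}(x)$ pins down $\rho(x)$ exactly as in the paper. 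Your extra degree count (via Lemma 3 and Corollary 2) showing $\rho(x)$ is determined only as a residue modulo $\iota^{*}(x)/\gcd^{*}(\iota(x),\ell(x))$ is a small, correct addition that the paper leaves implicit.
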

\begin{proof}
Since $(\overline{\ell}(x)|\overline{\theta}(x)) \in C^{\bot}$, $(\iota(x)|0) \in C$, we have
$$(\overline{\ell}(x)|\overline(\theta)(x)) \circ (\iota(x)|0) \equiv 0 \mod (x^l-1)$$ from Lemma 9.
Then we get $\overline{\ell}(x)\iota^{*}(x) = 0 \mod (x^m-1)$ in view of Lemma 10. Thus there exists a polynomial $\rho(x) \in R[x]$ such that $\overline{\ell}(x)=\frac{x^m-1}{\iota^{*}(x)} \rho(x) = (\sum_{i=1}^{3}\frac{x^m-1}{\iota^{*}_{v_i}(x)}v_i) \rho(x)$. We explain the details of $\rho(x)$ in the following proof.

From Lemma 9, we have $(\overline{\ell}(x) \mid \overline{\theta}(x)) \circ (\ell(x)|\theta(x))=0 \mod (x^l-1)$. Writing the concrete expression of $(\overline{\ell}(x) \mid \overline{\theta}(x)) \circ (\ell(x)|\theta(x))$, we obtain that
\begin{equation} \begin{split}
&(\overline{\ell}(x)|\overline{\theta}(x)) \circ (\ell(x)|\theta(x))=\left( \frac{x^m-1}{\iota^{*}(x)} \rho(x)|\frac{(x^n-1)\gcd^{*}(\iota(x),\ell(x))}{\iota^{*}(x)\theta^{*}(x)}\right)  \circ (\ell(x)|\theta(x))= \sum_{i=1}^{3}\left( (\frac{x^m-1}{\iota^{*}_{v_i}(x)} \right. \\ &\left.
\rho(x) \omega_{\frac{l}{m}}(x^m) x^{l-1-\deg(\ell_{v_i}(x))} \ell^{*}_{v_i}(x) + \frac{(x^n-1) \gcd^{*}_{v_i}(\iota(x),\ell(x))}{\iota^{*}_{v_i}(x) \theta^{*}_{v_i}(x)} \omega_{\frac{l}{n}}(x^n) x^{l-1-\deg(\theta_{v_i}(x))} \theta^{*}_{v_i}(x)\right)v_i. \notag \\
\end{split} \end{equation}
And $(x^m-1) \omega_{\frac{l}{m}}(x^m)=x^l-1$, $(x^n-1) \omega_{\frac{l}{n}}(x^n)=x^l-1$, we have
$$\thickmuskip=0mu \medmuskip=0mu \sum_{i=1}^{3}\frac{(x^l-1)\gcd^{*}_{v_i}(\iota(x),\ell(x))}{\iota^{*}_{v_i}(x)}\left( \rho_{v_i}(x)x^{l-\deg(\ell_{v_i}(x))}\frac{\ell^{*}_{v_i}(x)}{\gcd^{*}_{v_i}(\iota(x),\ell(x))}+x^{l-\deg(\theta_{v_i}(x))-1}\right)v_i= 0 \text{mod}(x^l-1).$$
This reveals that
$$\thickmuskip=0mu \medmuskip=0mu \sum_{i=1}^{3}\frac{x^l-1}{\iota^{*}_{v_i}(x)/\gcd^{*}_v(\iota(x),\ell(x))} \left( \rho_{v_i}(x) x^{l-\deg(\ell_{v_i}(x))} \frac{\ell^{*}_{v_i}(x)}{\gcd^{*}_{v_i}(\iota(x),\ell(x))} + x^{l-\deg(\theta_{v_i}(x))-1}\right)v_i = 0 \text{mod}(x^l-1).$$
Set $\tilde{\iota}(x)=\frac{\iota(x)}{\gcd(\iota(x),\ell(x))},\tilde{\ell}(x)=\frac{\ell(x)}{\gcd(\iota(x),\ell(x))}$,
hence $$\sum_{i=1}^{3}\frac{x^l-1}{\tilde{\iota}^{*}_{v_i}(x)}\left( \rho_{v_i}(x)x^{l-\deg(\ell_{v_i}(x))}\tilde{\ell}^{*}_{v_i}(x)+x^{l-\deg(\theta_{v_i}(x))-1}\right)v_i = 0 \mod(x^l-1).$$
Therefore $$\sum_{i=1}^{3}\left( \rho_{v_i}(x)x^{l-\deg(\ell_{v_i}(x))}\tilde{\ell}^{*}_{v_i}(x)+x^{l-\deg(\theta_{v_i}(x))-1}\right)v_i = 0 \mod (x^l-1)$$ or
$$\sum_{i=1}^{3}\left( \rho_{v_i}(x)x^{l-\deg(\ell_{v_i}(x))}\tilde{\ell}^{*}_{v_i}(x)+x^{l-\deg(\theta_{v_i}(x))-1}\right)v_i= 0 \mod (\tilde{\iota}^{*}(x)).$$
Since the former can be deduced the latter by reason of $\tilde{\iota}^{*}(x)|(x^l-1)$, we can assume that 
$$\sum_{i=1}^{3}\left( \rho_{v_i}(x)x^{l-\deg(\ell_{v_i}(x))}\tilde{\ell}^{*}_{v_i}(x)+x^{l-\deg(\theta_{v_i}(x))-1}\right)v_i= 0 \mod (\tilde{\iota}^{*}(x)).$$
From the fixing of $\tilde{\iota}(x)$ and $\tilde{\ell}(x)$, it is obviously that $\gcd(\tilde{\iota}(x),\tilde{\ell}(x))=1$. Furthermore $x^l = 1 \mod \iota^{*}(x)$. Then $\tilde{\ell}^{*}(x)$ is an invertible element modulo $\tilde{\iota}^{*}(x)$. Consequently, we have
$$\rho(x) = \left( \sum_{i=1}^{3}-x^{l-\deg(\theta_{v_i}(x))+\deg(\iota_{v_i}(x))}v_i\right) \left( \frac{\ell^{*}(x)}{\gcd^{*}(\iota(x),\ell(x))}\right) ^{-1} \mod\frac{\iota^{*}(x)}{\gcd^{*}(\iota(x),\ell(x))}.$$
\end{proof}
Summarizing several propositions and lemmas, we get the second primary theorem of this article.
\begin{theorem}
Let $C$ be an $R$-double cyclic code of length $(m,n)$, and set $C,C^{\bot}$ have the forms of $$\begin{cases}
C&=\left\langle (\iota(x)|0),(\ell(x)|\theta(x)) \right\rangle =\left\langle \left( \sum_{i=1}^3\iota_{v_i}(x)v_i|0\right),\left( \sum_{i=1}^3\ell_{v_i}(x)v_i| \sum_{i=1}^3 \theta_{v_i}(x)v_i\right) \right\rangle  \\
C^{\bot}&=\left\langle (\overline{\iota}(x)|0),(\overline{\ell}(x)|\overline{\theta}(x)) \right\rangle =\left\langle \left( \sum_{i=1}^3\overline{\iota}_{v_i}(x)v_i|0\right),\left( \sum_{i=1}^3\overline{\ell}_{v_i}(x)v_i| \sum_{i=1}^3 \overline{\theta}_{v_i}(x)v_i\right) \right\rangle.  \\
\end{cases}$$
Then
	
(1) $\overline{\iota}(x)=\sum_{i=1}^{3}\frac{x^m-1}{\gcd^{*}(\iota_{v_i}(x),\ell_{v_i}(x))}v_i$
	
(2) $\overline{\theta}(x)=\sum_{i=1}^{3}\frac{(x^n-1)\gcd^{*}(\iota_{v_i}(x),\ell_{v_i}(x))}{\iota^{*}_{v_i}(x)\theta^{*}_{v_i}(x)}v_i$
	
(3) $\overline{\ell}(x) =\rho(x) \sum_{i=1}^{3} \frac{x^m-1}{\iota^{*}_{v_i}(x)}v_i$, where
$$\begin{cases}
\rho(x)=0 \text{ } \text{if C is separable, or otherwise} \\
\rho(x)=\left( \sum_{i=1}^{3}-x^{l-\deg(\theta_{v_i}(x))+\deg(\iota_{v_i}(x))}v_i\right) \left( \frac{\iota^{*}(x)}{\gcd^{*}(\iota(x),\ell(x))}\right)^{-1} \mod\frac{\iota^{*}(x)}{\gcd^{*}(\iota(x),\ell(x))}\\
\end{cases}$$
\end{theorem}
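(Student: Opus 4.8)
The plan is to assemble Theorem 4 from the three preceding propositions, since each of its three displayed identities is precisely the conclusion of one of them. Because the entire theory rests on the orthogonal idempotent splitting $R = \mathbb{F}_q v_1 \oplus \mathbb{F}_q v_2 \oplus \mathbb{F}_q v_3$, every such identity decomposes into three independent equalities in $\mathbb{F}_q[x]$ indexed by $i = 1,2,3$, and it suffices to verify each $v_i$-component and recombine through $\sum_{i=1}^{3}(-)v_i$.

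First I would obtain part (1) directly from Proposition 6, which already states $\overline{\iota}(x) = \sum_{i=1}^{3} \frac{x^m-1}{\gcd^{*}(\iota_{v_i}(x),\ell_{v_i}(x))} v_i$, and part (2) directly from Proposition 7. For part (3) I would split along the dichotomy recorded in the formula for $\rho(x)$. In the non-separable case the expressions for $\overline{\ell}(x)$ and $\rho(x)$ are exactly the conclusion of Proposition 8. In the separable case, Lemma 6 forces $\ell_{v_i}(x) = 0$ for all $i$, so $C = \langle(\iota(x)|0),(0|\theta(x))\rangle$; Proposition 5 then gives $C^{\bot} = \langle(\frac{x^m-1}{\iota^{*}(x)}|0),(0|\frac{x^n-1}{\theta^{*}(x)})\rangle$, whence $\overline{\ell}(x) = 0$ and $\rho(x) = 0$, matching the stated product form. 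A quick consistency check confirms that parts (1) and (2) specialize correctly under $\ell = 0$, since then $\gcd(\iota(x),\ell(x)) = \iota(x)$.

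The substantive content is entirely carried by the earlier propositions, so at this stage Theorem 4 is only a repackaging, and I would not redo any computation. The genuinely difficult work, already executed in Propositions 6--8, is the translation of orthogonality relations such as $(\overline{\iota}(x)|0)\circ(\iota(x)|0) = 0$ and $(\overline{\iota}(x)|0)\circ(\ell(x)|\theta(x)) = 0$ into divisibility statements over $\mathbb{F}_q[x]$ by means of the pairing $\circ$ of Definition 4 together with Lemmas 9 and 10, after which the degree formulas of Corollary 2 upgrade divisibility to equality. The most delicate point to reproduce would be the derivation of $\rho(x)$: one reduces the orthogonality identity modulo $\tilde{\iota}^{*}(x) = \iota^{*}(x)/\gcd^{*}(\iota(x),\ell(x))$, uses $x^l \equiv 1 \pmod{\iota^{*}(x)}$ and $\gcd(\tilde{\iota}(x),\tilde{\ell}(x)) = 1$ to invert $\tilde{\ell}^{*}(x)$ modulo $\tilde{\iota}^{*}(x)$, and tracks the exponent $l - \deg(\theta_{v_i}(x)) + \deg(\iota_{v_i}(x))$ through the bookkeeping. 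With Corollary 2 and the multiplicativity of the reciprocal map (Lemma 5) in hand, recombining the three $v_i$-components yields the stated closed forms.
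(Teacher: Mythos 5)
Your proposal matches the paper exactly: the paper offers no separate proof of this theorem beyond the remark that it summarizes the preceding propositions, and your assembly of parts (1), (2), and (3) from the propositions computing $\overline{\iota}(x)$, $\overline{\theta}(x)$, and $\overline{\ell}(x)$ respectively, with the separable-code lemma and the proposition on separable duals handling the $\rho(x)=0$ case, is precisely that summary. Your added consistency check that parts (1) and (2) specialize correctly when $\ell(x)=0$ (so that $\gcd(\iota(x),\ell(x))=\iota(x)$) is a small extra the paper does not record, but the route is the same.
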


\section{Conclusion}
In this paper, we analyze the algebraic structure of double cyclic codes over $\mathbb{F}_q+v\mathbb{F}_q+v^2\mathbb{F}_q$ with $v^3=v$. We provide the generator polynomials of these series of codes and give their generating matrices forms. Furthermore, we also discuss the quantitive relationship between the double cyclic codes and their duals. Finally, we determine the relationship between the generator of double cyclic codes and their duals.

Specially, if we let $v=0 \text{ } \text{or} \text{ } 1 \text{ } \text{or} \text{ } -1$, then $v_1=1,v_2=v_3=0$ or $v_2=1,v_1=v_3=0$ or $v_3=1, v_1=v_2=0$. The article show up the results about double cyclic codes over $\mathbb{F}_q$. This is the content of the paper \cite{Diao}.


\begin{thebibliography}{99}
\bibitem{Borges1}J. Borges, C. Fernández-Córdoba, J. Pujol, J. Rifà, $\mathbb{Z}_2\mathbb{Z}_4$-linear codes: generator matrices and duality, Designs, Codes and Cryptography. 54(2009) 167-179 
\bibitem{Abualrub1}Abualrub, T., Siap, I. and Aydin, N. (2014) $\mathbb{Z}_2\mathbb{Z}_4$-additive cyclic codes’, IEEE Trans. Inform.
Theory, Vol. 60, pp.1508–1504.
\bibitem{Abualrub2}Abualrub, T., Siap, I. and Aydogdu, I. (2014) ‘‘$\mathbb{Z}_2(\mathbb{Z}_2+u\mathbb{Z}_2)$-linear cyclic codes’, Proceedings of the International Multiconference of Engineers and Computer Scientists, 2014 Vol II, IMECS 2014, March 12–14, 2014, Hong Kong.
\bibitem{Aydogdu1}I., Abualrub, T. and Siap, I. (2013) ‘$\mathbb{Z}_2 \mathbb{Z}_2 [u]$-additive codes’, Int. J. Comput. Math., Vol. 92, No. 9, pp.1806–1814.
\bibitem{Roy}T. Roy, S.C. De Sarkar: $\mathbb{Z}_2\mathbb{Z}_{2^i}$-Additive Cyclic Codes, ArXiv,abs/1708.06913 2017
\bibitem{Aydogdu2}I., and Siap, I. (2014) ‘$\mathbb{Z}_{p^r} \mathbb{Z}_{p^s}$-additive codes’, Linear Multilinear A., Vol. 63, No. 10, pp.2089–2102.
\bibitem{Borges2}J., Fernández-Córdoba, C., Pujol, J. and Rifà, J. (2009) ‘$\mathbb{Z}_2\mathbb{Z}_4$-linear codes: geneartor matrices and duality’, Des. Codes Cryptogr., Vol. 54, pp.167–179.
\bibitem{Jian}J. Gao, M. Shi, T. Wu, F-W. Fu. On double cyclic codes over $\mathbb{Z}_4$. Finite Fields and Their Applications. Volume 39, May 2016, pp. 233-250 Elsevier
\bibitem{Jian2}J. Gao, X. Hou: $\mathbb{Z}_4$-Double Cyclic Codes Are Asymptotically Good. IEEE Communications Letters, 2020, 24(8): 1593-1597.
\bibitem{Wang}Y. Wang, J. Gao: Double $\lambda$-constacyclic codes over finite chain rings. Journal of Shandong University of Technology(Natural Science Edition)(in Chinese) Vol.32 No.5 Sep 2018 1672-6197(2018)05-0058-05.
\bibitem{Yao}T. Yao, M. Shi, P. Solé: Double cyclic codes over $\mathbb{F}_q+u\mathbb{F}_q+u^2\mathbb{F}_q$. Int. J. Inf. Coding Theory 3 (2015), no. 2, 145–157.
\bibitem{Jian3}J.Gao. Some results on linear codes over $\mathbb{F}_p+u\mathbb{F}_p+u^2\mathbb{F}_p$. J. Appl. Math. Comput. (2015) 47:473-485 
\bibitem{Lin2}L. Diao, J. Gao, J. Lu. Some results on $\mathbb {Z} _p\mathbb {Z} _p [v] $-additive cyclic codes. Advances in Mathematics of Communications, 2020, 14(4): 555.
\bibitem{Melakhessou}Melakhessou, A., Guenda, K., Gulliver, T.A. et al. On codes over $\mathbb{F}_q+v\mathbb{F}_q+v^2\mathbb{F}_q$. J. Appl. Math. Comput. 57, 375–391 (2018).
\bibitem{Fanghui}Fanghui Ma, Jian Gao, and Fang-Wei Fu. 2018. Constacyclic codes over the ring ${\mathbb {F}}_q+v{\mathbb {F}}_q+v^{2}{\mathbb {F}}_q$ and their applications of constructing new non-binary quantum codes. Quantum Information Processing 17, 6 (June 2018), 1–19.
\bibitem{Shi}Shi, M., Yao, T., Alahmadi, A., Solé, P. (2015). Skew Cyclic Codes over $\mathbb{F}_{q}+v\mathbb{F}_{q}+v^{2}\mathbb{F}_{q}$. IEICE Transactions on Fundamentals of Electronics, Communications and Computer Sciences, 98, pp. 1845-1848.
\bibitem{Diao}Y. Diao, J Gao: Double cyclic codes over finite fields. Journal of Shandong University of Technology(Natural Science Edition)(in Chinese) Vol.31 No.3 May 2017 1672-6197(2017)03-0073-06
\end{thebibliography}
\end{document}